\theoremstyle{definition}
\newtheorem{theorem}{\bf Theorem}
\newtheorem{remark}{\bf Remark}
\newtheorem{prop}{\bf Proposition}
\newtheorem{assumption}{Assumption}
\newtheorem{corollary}{Corollary}
\begin{document}
	\title{Multi-kernel Correntropy Regression: Robustness, Optimality, and Application on Magnetometer Calibration}
	\author{Shilei Li, Yihan Chen, Yunjiang Lou, Dawei Shi, Lijing Li, Ling Shi
	\thanks{Manuscript received March 29, 2023. This work was supported in part by the NSFC-Shenzhen Robotics Basic Research Center Program under Grant U1913208 and in part by the Shenzhen Science and Technology Program under Grant JCYJ20220818102415033.}
	\thanks{Shilei Li and Ling Shi are with the Department of Electronic and Computer Engineering, The Hong Kong University of Science and Technology, Hong Kong, China (e-mail: slidk@connect.ust.hk, eesling@ust.hk).}
	\thanks{Yihan Chen and Yunjiang Lou are with the State Key Laboratory of Robotics and System, School of Mechanical Engineering and Automation, Harbin Institute of Technology Shenzhen, Shenzhen 518055, China (e-mail:  200310114@stu.hit.edu.cn, louyj@hit.edu.cn).}
	\thanks{Dawei Shi is with the School of Automation, Beijing Institute of Technology, China (e-mail: daweishi@bit.edu.cn).}
	\thanks{Lijing Li is with the School of Information and Control Engineering, China University of Mining and Technology, China (e-mail: lilijing\_29@163.com).}}
	\maketitle
\begin{abstract}
	This paper investigates the robustness and optimality of the multi-kernel correntropy (MKC) on linear regression. We first derive an upper error bound for a scalar regression problem in the presence of arbitrarily large outliers and state that the kernel bandwidth plays an important role in minimizing the lowest upper error bound. Then, we find that the proposed MKC is related to a specific heavy-tail distribution, where its head shape is consistent with the Gaussian distribution but its tail shape is heavy-tailed, and the extent of heavy-tail is controlled by the kernel bandwidth. It becomes a Gaussian distribution when the bandwidth is infinite, which allows one to tackle both Gaussian and non-Gaussian problems without explicitly investigating the noise distributions. To explore the optimal underlying distribution parameters, an expectation-maximization-like (EM) algorithm is developed to estimate the parameter vectors and the distribution parameters in an alternating manner. The results show that our algorithm can achieve equivalent performance compared with the traditional linear regression under Gaussian noise, and it significantly outperforms the conventional method under heavy-tailed noise. Both numerical simulations and experiments on a magnetometer calibration application verify the effectiveness of the proposed method.
\end{abstract}
\def\abstractname{Note to Practitioners}
\begin{abstract}
	The goal of this paper is to enhance the accuracy of conventional linear regression in handling outliers while maintaining its optimality under Gaussian assumptions. Our algorithm is formulated under the maximum likelihood estimation (MLE) framework, assuming the regression residuals follow a type of heavy-tailed noise distribution with an extreme case of Gaussian. The degree of the heavy tail is explored alternatingly using an Expectation-Maximization (EM) algorithm which converges very quickly. The robustness and optimality of the proposed approach are investigated and compared with the traditional approaches. Both theoretical analysis and experiments on magnetometer calibration demonstrate the superiority of the proposed method over the conventional methods. In the future, we will extend the proposed method to more general cases (such as nonlinear regression and classification) and derive new algorithms to accommodate more complex applications (such as with equality or inequality constraints or with prior knowledge of parameter vectors).
\end{abstract}
\begin{IEEEkeywords}
	linear regression, multi-kernel correntropy, robustness and optimality, maximum likelihood estimation, expectation-maximization, magnetometer calibration
\end{IEEEkeywords}
\section{Introduction}
   Regression is the procedure of uncovering a mathematical relationship of interest through a set of inputs, outputs, and a known mapping function. Conventional solutions for this problem are some second-order statistics-based algorithms, e.g., weighted least square (WLS) regression and its variants~\cite{a1,a2}. However, they perform poorly under non-Gaussian noise~\cite{a3}, especially in the presence of outliers or disturbances since the underlying assumption behind the WLS is Gaussian noise distribution. \textcolor{black}{One practical non-Gaussian noise example is the ellipsoid calibration of the magnetometer, where the sensor is vulnerable to ferromagnetic materials and can be easily distorted by surrounding ferromagnetic materials. In such a scenario, our aim is to recover the ellipsoid parameter vector even if the measured data is polluted~\cite{ab4}.} The non-Gaussian noise is also very common in many other practical engineering problems. It can be caused by intermittent sensor failures, communication disruptions, external disturbances, and multipath effects of signals (e.g., fault diagnosis in ~\cite{a4}, identification of switched linear systems in ~\cite{a5}, \textcolor{black}{multipath effects in \cite{ab5}}, etc.), and hence should be taken into consideration when designing algorithms.

   Many robust techniques have been developed to accommodate the heavy-tailed non-Gaussian noise, which roughly can be divided into two categories: robust statistics and correntropy. Some typical methods of the first category include the least trimmed squares~\cite{a6}, the least median of squares\cite{a7}, the least absolute derivation (LAD)~\cite{a8}, the fractional lower order moments~\cite{a9}, and the M-estimators~\cite{a10}. Recently, correntropy, given its root in Renyi's entropy~\cite{a11,a12} under the framework of information-theoretic learning, has emerged. It is a local similarity measure of two random variables where the kernel bandwidth acts as a zoom lens controlling the ``observation window'' in which similarity is assessed~\cite{a11}. The correntropy has the ability to capture a higher order of error moments~\cite{a13} and has
   been successfully applied to regression~\cite{a14,a15}, kernel adaptive filtering~\cite{a16}, state estimation~\cite{a17}, smoothing~\cite{a18}, adaptive filtering~\cite{a19}, and machine learning~\cite{a20}. It is worth mentioning that the correntropy is a non-convex objective function of residuals. Existing solutions to this problem include the gradient descent~\cite{a11,b20}, fixed-point iteration~\cite{a13,c20,a17,a18}, half-quadratic methods~\cite{d20}, and evolutionary algorithms~\cite{e20}.   
   
   The correntropy-based methods generally can enhance the robustness of regression with respect to heavy-tailed noise, but this ability is closely related to the kernel bandwidth of the kernel function which should be optimized based on the characteristic of the data set. There are two fundamental questions on correntropy that need to be explored: how robust it is and how to select the kernel bandwidth? To the best of the authors' knowledge, only \cite{a14,a21} discussed the robustness of the correntropy. In ~\cite{a14}, an explicit error bound was derived under a errors-in-variables (EIVs) model with scalar variables. In \cite{a21}, a general robustness analysis for linear regression was presented. Unfortunately, the bound presented in \cite{a21} was not computable. Some works discussed the selection criteria of the kernel bandwidths, which include~\cite{a21,a17,a22,a23}. In ~\cite{a21,a17}, an adaptive kernel bandwidth which is proportional to the amplitude of the error was employed. This strategy is usually deployed for the convenience of practical implementation and the optimality is not guaranteed. In ~\cite{a22}, the kernel bandwidth was updated iteratively by seeking the greatest attenuation along the direction of the gradient ascent. In \cite{a23}, a probability density matching (pdf) strategy was utilized to explore the kernel sizes. However, to the best of the author's knowledge, these parameter selection strategies majorly are developed by intuitions or empirical experience and cannot guarantee optimality under the framework of maximum likelihood estimation.
   
   In this work, we handle the aforementioned questions under the framework of \emph{multi-kernel correntropy} (MKC), which is an extension of the original correntropy proposed in our previous works~\cite{a24,a25,a26}. There are two major differences between the MKC and conventional correntropy. The first is that we use different kernel bandwidths for different random pair of variables (we denotes them as different channels in the following section for convenience) which greatly alleviates the conservatism of conventional correntropy. This modification can be an analogy of using heteroscedastic loss to replace the homoscedastic loss in optimization. The second is that specific weights are associated with the MKC so that the MKC-induced distribution becomes the Gaussian distribution with infinite kernel bandwidth. In this paper, we first provide a fixed-point solution for linear regression under the MKCL (i.e., MKC loss). Then, we derive an upper error bound for the MKCL in a scalar regression problem and prove that the MKCL is much more robust to outliers compared with the WLS. Further, we disclose that the MKCL is associated with a specific type of heavy-tailed distribution where its head shape is determined by the corresponding Gaussian distribution, and its tail shape is controlled by the kernel bandwidth. This finding provides a clear relationship between the correntropy and its induced noise distribution and makes it possible to optimize correntropy parameters under the framework of MLE (note that it is equivalent to minimizing the dissimilarity between the empirical distribution defined by the training set and the model distribution, with the degree of dissimilarity between the two measured by the KL divergence~\cite{b12}). Interestingly, the MKCL-associated distribution is equivalent to the Gaussian distribution when the kernel bandwidth is infinite, indicating that the MKCL-based algorithm is always at least as effective as the WLS-based approaches when the kernel bandwidth is properly selected. To automatically adjust the correntropy parameters, we develop an EM-like algorithm that alternatingly estimates the kernel parameters and the parameter vector to maximize the overall log-likelihood function. 
   
   We conducted both numerical simulations and experiments to verify the effectiveness of the proposed algorithm. Specifically, two numerical simulations were performed to demonstrate the proposed algorithm's robustness and superiority. In addition, we  conducted an experiment of ellipsoid fitting for magnetometer calibration to verify the effectiveness of the proposed method in a practical application. It is worth noting that ellipsoid fitting with outliers is not only important in sensor calibration~\cite{b5,b6}, but also has significant applications in computer vision~\cite{b2}, robotics, geology, and medical imaging. The contributions of this paper are summarized as follows:
   \color{black}
   \begin{enumerate}
	\item We build an explicit relationship between the MKCL and a type of heavy-tailed distribution. The results indicate the MKCL-based method generally outperforms the WLS solution if the correntropy parameters are properly selected since its induced distribution has an additional free parameter to match the noise tail shape compared with the Gaussian distribution. 
	\item To analyze the robustness of the MKCL, we establish an explicit upper error bound for a scalar regression problem. We find that the derived error bound is closely related to the selection of the correntropy parameters.
	\item To jointly optimize both the correntropy parameters and the parameter vector, an MKC expectation maximization (MKC-EM) algorithm is constructed which optimizes the target state and latent state alternatingly. A fixed-point solution is utilized to estimate the parameter vector under current correntropy parameters. Then, the Broyden–Fletcher–Goldfarb–Shanno (BFGS) method is employed to update the correntropy parameters by assuming that the parameter vector is known. The proposed MKC-EM algorithm converges to the steady state after 2-3 iterations and its superiority over the existing method is verified under both simulations and experiments.
   \end{enumerate} 
   \color{black}
   The remainder of this paper is arranged as follows. In Section II, we present some preliminaries. In Section III, we provide the linear regression under the MKCL and give its robustness analysis and correntropy parameters optimization strategy. In Section IV, we present some illustrative examples and experiments. In Section V, we draw a conclusion. 
   
   \emph{Notations}: The transpose of a matrix $A$ is denoted by $A^{T}$. The vector with $l$ dimensions is denoted by $\mathbb{R}^{l}$ and the matrix with $m$ rows and $n$ columns is denoted by $\mathbb{R}^{m \times n}$. The Gaussian distribution with mean $\mu$ and covariance $\Sigma$ is denoted by $\mathcal{N}(\mu,\Sigma)$. The uniform distribution with bounds $a$ and $b$ is denoted by $\mathcal{U}(a,b)$. The $p$ norm of a vector $x$ or matrix $A$ is denoted by $\|x\|_p$ or $\|A\|_p$. The expectation of a random variable $X$ or random vector $\mathcal{X}$ is denoted by $E(X)$ or $E(\mathcal{X})$. The operator $\operatorname{diag}(\{\cdot\})$ generates a (block) diagonal matrix with the enclosed arguments on the main diagonal.
   \section{Preliminaries}
   In this section, we start from the traditional linear regression under the WLS criterion. Then, we provide some preliminaries of the MKC. \textcolor{black}{Finally, we provide an overview of the proposed method.}
   \subsection{Linear Regression}
   Let $y_k \in \mathbb{R}^{m}$, $X_k \in \mathbb{R}^{m \times n}$ be the output and input of some stochastic processes. They are related by 
   \begin{equation}
   y_k = X_k \theta^{o} + v_k
   \label{sys}
   \end{equation}
   where the subscript $k$ denotes the time index, $v_k \in \mathbb{R}^{m}$ is the noise,  and $\theta^{o} \in \mathbb{R}^{n}$ is the unknown parameter vector. Assume that a total of $N$ samples are available. Then, denote
   $\mathbf{Y}=[y_1^{T},y_2^{T},\ldots,y_{N}^{T}]^{T} \in \mathbb{R}^{mN \times 1}$, $\mathbf{X}=[X_1^{T},X_2^{T},\ldots,X_{N}^{T}]^{T} \in \mathbb{R}^{mN \times n}$ and $\mathbf{V}=[v_1^{T},v_2^{T},\ldots,v_{N}^{T}]^{T} \in \mathbb{R}^{mN \times 1}$, one has 
	\begin{equation}
	\mathbf{Y}=\mathbf{X}\theta^{o}+ \mathbf{V}.
	\label{system}
	\end{equation}
	Let $\theta$ be an estimate of the parameter vector. The difference between the output and the projected values has
	\begin{equation}
	\bm{e}=\mathbf{Y}-\mathbf{X}\theta
	\end{equation}
	where $\bm{e} \in \mathbb{R}^{mN \times 1}$ is the stacked error of $e_k=y_k-X_k \theta$. Under the least square (LS) criterion, the parameter vector can be obtained by solving 
	\begin{equation}
	\begin{aligned}
	&\theta = \arg \min J(\theta)\\
	&J(\theta)=\frac{1}{2N}\bm{e}^{T}\bm{e}.
	\end{aligned}
	\label{jmse}
	\end{equation} 
	Setting $\frac{\partial J(\theta)}{\partial \theta}=0$, one obtains
	\begin{equation}
	\theta=(\mathbf{X}^{T}\mathbf{X})^{-1}\mathbf{X}^{T}\mathbf{Y}.
	\end{equation}
	In some cases, the nominal measurement noise distribution is the \emph{a priori} knowledge which follows $v_k \sim \mathcal{N}(0, R)$ where $R=\operatorname{diag}(\{d_1^2,d_2^2,\ldots,d_m^2\})$ is a diagonal matrix (note that this assumption is without loss of generality since a linear system with a general covariance matrix can be transferred as another linear system with a diagonal covariance matrix using matrix diagonalization technique). The corresponding precision matrix $P$ is the inverse of the covariance matrix with $P=R^{-1}= \operatorname{diag}(\{1/d_1^2,1/d_2^2,\ldots,1/d_m^2\}$). To incorporate this information into regression, the WLS criterion is utilized with
	\begin{equation}
	J(\theta)=\frac{1}{2N}\bm{e}^{T}\mathbf{P}\bm{e},
	\label{WLS}
	\end{equation}
	where $\mathbf{P}=\operatorname{diag}(\{P,P,\ldots,P\}) \in \mathbb{R}^{mN \times mN}$. By setting $\frac{\partial J(\theta)}{\partial \theta}=0$, one has
	\begin{equation}
		\theta=(\mathbf{X}^{T}\mathbf{P}\mathbf{X})^{-1}\mathbf{X}^{T}\mathbf{P}\mathbf{Y}.
		\label{solution}
	\end{equation}
	It is obvious that \eqref{WLS} is identical to \eqref{jmse} if we use the weighted error $\tilde{e}_k = P^{1/2}e_k$ to replace the $e_k$ in \eqref{jmse}. Due to the fact $P$ is diagonal by definition, it follows that $\tilde{e}_k(i)=P_{ii}^{1/2}e_k(i)=\frac{e_k(i)}{d_i}$ where $\tilde{e}_k(i)$ and $e_k(i)$ are the $i$-th element of $\tilde{e}_k$ and $e_k$, and $P_{ii}$ is the $i$-th main diagonal entry of matrix $P$. We use this substitution in the following section for the algorithm derivation.
   \subsection{Multi-kernel Correntropy}
   The correntropy is a local similarity measure of two random variables $X,Y \in \mathbb{R}$ with 
   \begin{equation}\nonumber
   C(X,Y)= E[\kappa(X,Y)]=\int \kappa(x,y) d F_{XY}(x,y)
   \end{equation}
   where $\kappa(x,y)$ is a shift-invariant Mercer kernel, $F_{XY}(x,y)$ is the joint distribution, and $x$ and $y$ are realizations of $X$ and $Y$.
   In ~\cite{a24,a25}, we present the MKC for random vectors $\mathcal{X}, \mathcal{Y} \in \mathbb{R}^{m}$: 
   \begin{equation}\nonumber
   \begin{aligned}
   C(\mathcal{X},\mathcal{Y})&= \sum_{i=1}^{m} E[\sigma_i^2 \kappa_i(\mathcal{X}_i,\mathcal{Y}_i)]\\
   E[\sigma_i^2\kappa_i(\mathcal{X}_i,\mathcal{Y}_i)]&=\int \sigma_i^2\kappa_i\Big(x(i),y(i)\Big)d F_{\mathcal{X}_i\mathcal{Y}_i}\Big(x(i),y(i)\Big)
   \end{aligned}
   \end{equation}
   where $\mathcal{X}_i$ and $\mathcal{Y}_i$ are random elements of $\mathcal{X}$ and $\mathcal{Y}$, $\kappa_i\big(x(i),y(i)\big)=G_{\sigma_i}\big(\tilde{e}(i)\big)=\exp\big(-\frac{\tilde{e}(i)^2}{2\sigma_i^2}\big)$ is the Gaussian kernel, $\sigma_i$ is the kernel bandwidth for random pair $(\mathcal{X}_i,\mathcal{Y}_i)$, $\tilde{e}(i)=\frac{x(i)-y(i)}{d_i}$ is the weighted realization error, and $d_i$ is the nominal standard deviation for channel $i$. In a practical application, joint distribution $F_{\mathcal{X}_i\mathcal{Y}_i}\big(x(i),y(i)\big)$ usually is not available and only $N$ samples can be obtained.  In this situation, one can estimate MKC as
   \begin{equation}
   \begin{aligned}
   \hat{C}(\mathcal{X},\mathcal{Y})&=\sum_{i=1}^{m}\sigma_i^2 \hat{C}_{i}(\mathcal{X}_i,\mathcal{Y}_i)\\
   \hat{C}_{i}(\mathcal{X}_i,\mathcal{Y}_i)&= \frac{1}{N}\sum_{k=1}^{N}G_{\sigma_i}\big(\tilde{e}_k(i)\big)
   \end{aligned}
   \end{equation}
   where $\tilde{e}_{k}=P^{1/2} e_k=P^{1/2}\big(x_{k}-y_{k}\big)$ is the weighted error and $\tilde{e}_{k}(i)$ is the $i$-th element of $\tilde{e}_{k}$. 
   Correspondingly, the MKCL $J_{CL}$ has
   \begin{equation}
   \begin{aligned}
   J_{CL} &=\sum_{i=1}^{m}\sigma_i^2 (1-\hat{C}_{i})\\
   &=\sum_{i=1}^{m}\sigma_i^2 \Big(1-\frac{1}{N}\sum_{k=1}^{N}G_{\sigma_i}\big(\tilde{e}_k(i)\big)\Big).
   \label{GL}
   \end{aligned}
   \end{equation}
   As a comparison, the WLS criterion has
   \begin{equation}
   \begin{aligned}
   J_{WLS} & = \frac{1}{2N} \sum_{k=1}^{N}\tilde{e}_k^{T}\tilde{e}_k\\
   & = \frac{1}{2N}\sum_{i=1}^{m}\sum_{k=1}^{N}\tilde{e}^2_k(i).
   \label{WWLS}
   \end{aligned}
   \end{equation}
   \begin{remark}
   	It is worth mentioning that our proposed MKC has different meanings from the concept proposed in \cite{a20}. From the formulation aspect, we associate specific weight $\sigma_i^2$ for the correntropy at channel $i$ and use different kernel bandwidths for different channels, while \cite{a20} uses a mixture of kernel functions to construct a novel kernel function and apply it to all channels. From the purpose aspect, our aim is to induce a type of heavy-tailed distribution that is consistent with the Gaussian distribution in the extreme case and the extent of heavy tail at different channels can be parameterized by kernel bandwidth while the aim of \cite{a20} is to accommodate complex error distributions (i.e., skewed and multi-peak distributions, see Fig. 1 in \cite{a20}).
   \end{remark}
   \begin{theorem}
   	$J_{GL}$ in \eqref{GL} and $J_{WLS}$ in \eqref{WWLS} are identical when $\sigma_i \to \infty$ for $i=1,2,\ldots,l$.
   	\label{theorem1}
   \end{theorem}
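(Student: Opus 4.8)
The plan is to reduce the claim to an elementary limit applied term by term. First I would rewrite the MKC loss \eqref{GL} by pushing the channel weight $\sigma_i^2$ through the sample average, which exposes a double sum of per-sample, per-channel contributions:
\[
J_{CL} = \sum_{i=1}^{m}\frac{1}{N}\sum_{k=1}^{N}\sigma_i^2\Big(1 - G_{\sigma_i}\big(\tilde{e}_k(i)\big)\Big)
       = \sum_{i=1}^{m}\frac{1}{N}\sum_{k=1}^{N}\sigma_i^2\Big(1 - \exp\big(-\tfrac{\tilde{e}_k(i)^2}{2\sigma_i^2}\big)\Big).
\]
This mirrors the structure of $J_{WLS}$ in \eqref{WWLS}, so it suffices to show that each summand converges to $\tfrac{1}{2N}\tilde{e}_k(i)^2$.

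Next, for a fixed index pair $(i,k)$ with $\tilde{e}_k(i)$ held constant, I would set $u_{i,k} = \tilde{e}_k(i)^2/(2\sigma_i^2)$, observe that $u_{i,k}\to 0$ as $\sigma_i\to\infty$, and invoke the elementary expansion $1 - e^{-u} = u + O(u^2)$ (equivalently $\lim_{u\to 0}(1-e^{-u})/u = 1$). This yields
\[
\lim_{\sigma_i\to\infty}\sigma_i^2\Big(1 - \exp\big(-\tfrac{\tilde{e}_k(i)^2}{2\sigma_i^2}\big)\Big)
= \lim_{\sigma_i\to\infty}\sigma_i^2\Big(\tfrac{\tilde{e}_k(i)^2}{2\sigma_i^2} + O(\sigma_i^{-4})\Big)
= \frac{\tilde{e}_k(i)^2}{2}.
\]
Because the outer sums over $i=1,\dots,m$ and $k=1,\dots,N$ are finite, the limit commutes with them, and letting all $\sigma_i\to\infty$ gives $\lim J_{CL} = \frac{1}{2N}\sum_{i=1}^{m}\sum_{k=1}^{N}\tilde{e}_k(i)^2 = J_{WLS}$. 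In particular the two objectives coincide in this regime, hence so do their minimizers in $\theta$, recovering the solution \eqref{solution}.

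There is no genuine obstacle in this argument; the only point deserving a remark is the justification for interchanging the limit with the finite double sum (immediate, and in fact the convergence is uniform on any bounded set of residuals), together with the understanding that "identical" in the statement is meant in this limiting sense.
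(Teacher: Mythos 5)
Your proposal is correct and follows essentially the same route as the paper: both arguments reduce to the term-by-term limit $\lim_{\sigma_i\to\infty}\sigma_i^2\bigl(1-G_{\sigma_i}(\tilde{e}_k(i))\bigr)=\tilde{e}_k^2(i)/2$, which the paper obtains via the Taylor expansion of the Gaussian kernel and you obtain via the equivalent expansion $1-e^{-u}=u+O(u^2)$, followed by summing over the finite index sets. Your explicit remark on interchanging the limit with the finite sums is a minor added justification but does not change the substance of the argument.
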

   \begin{proof}
   	Taking Taylor series expansion of $G_{\sigma_i}\big(\tilde{e}_k(i)\big)$, one has
   	\begin{equation}\nonumber
   	G_{\sigma_i}\big(\tilde{e}_k(i)\big)=\sum_{n=0}^{\infty}\frac{(-1)^{n}}{2^{n}{\sigma}_{i}^{2n}n!}\tilde{e}_k^{2 n}(i).
   	\end{equation}
   	When setting $\sigma_i \to \infty$, it follows that
   	\begin{equation}
   	\lim\limits_{\sigma_i \to \infty} \sigma_i^2 \Big(1-G_{\sigma_i}\big(\tilde{e}_k(i)\big)\Big) = \tilde{e}_k^2(i)/2.
   	\label{limtaylor}
   	\end{equation}
   	Substituting \eqref{limtaylor} into \eqref{GL}, one obtains 
   	\begin{equation}
   	\lim\limits_{\sigma_i \to \infty} J_{GL}= \frac{1}{N}\sum_{i=1}^{m}\sum_{k=1}^{N}\tilde{e}_k^2(i)/2=J_{WLS}.
   	\end{equation}
   	This completes the proof.
   \end{proof}
   \begin{remark}
   	It is obvious that the MKCL in \eqref{GL} is closely related to the selection of correntropy parameters $\{\sigma_i, d_i\}_{i=1}^{m}$. We will provide a physical explanation for these parameters in Section \ref{mkcpdf} and optimize them under the MLE framework in Section \ref{kpop}.
   \end{remark}
  \color{black}
 \subsection{Overview of the Proposed Method}
 It is well-known that the performance of the WLS algorithm relies heavily on the Gaussian noise assumption and its performance would degenerate significantly when the training set's distribution is notably different from the model distribution. To solve this issue, we propose an MKC-EM algorithm for linear regression which matches the model distribution and the data distribution automatically. The algorithm overview is summarized in Fig. \ref{overview}, which is composed of a maximization (M) step and an expectation (E) step. The purpose of the M step is to update the parameter vector under the current correntropy parameter. A fixed-point solution is designed to solve this problem and the detailed algorithm is summarized in Algorithm \ref{fps}. The aim of the E step is to update the correntropy parameter so that the model distribution matches the practical noise distribution. A BFGS solution is constructed to handle this problem and the details are in Algorithm \ref{kpo}. The overall method is called MKC-EM and is summarized in Algorithm \ref{lrmkc}. We will provide detailed descriptions of these algorithms in the following section.
 \begin{figure}[!h]
 	\centerline{\includegraphics[width=8.5cm]{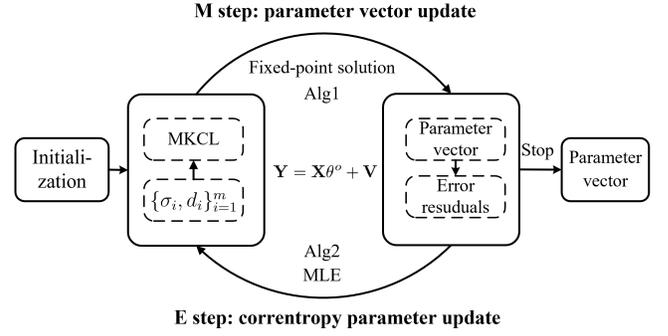}}
 	\caption{\textcolor{black}{Overview of the proposed MKC-EM algorithm for linear regression.}}
 	\label{overview}
 \end{figure}
   \color{black}
   \section{Linear Regression under the MKC} 
   \color{black}
   In this section, we first provide the linear regression solution under the MKCL in Section \ref{mkclreg}. Then, we build an explicit relationship between the MKCL and the noise distribution in Section \ref{mkcpdf}, and provide the robustness analysis of the MKC in Section \ref{robustness}. Furthermore, we give the correntropy parameter optimization strategy and the whole MKC-EM algorithm in Section \ref{kpop}. Finally, we discuss the convergence of the proposed method in Section \ref{convergence}.
  \color{black}
   \subsection{Fixed-point Solution of Linear Regression under the MKCL}
   \label{mkclreg}
    Under the minimum MKCL criterion, the problem \eqref{sys} can be solved by 
   \begin{equation}
   \begin{aligned}
   &\theta= \arg \min J_{CL}(\theta)\\
   &J_{CL}(\theta) = \sum_{i=1}^{m}\sigma_i^2 \Big(1- \frac{1}{N}\sum_{k=1}^{N}G_{\sigma_i}\big(\tilde{e}_{k}(i)\big)\Big)
   \label{cl}
   \end{aligned}
   \end{equation}
   where $\tilde{e}_{k}(i)= \frac{e_k(i)}{d_i}$, $e_k =y_k-X_k \theta$, and $d_i$ and $\sigma_i$ are the nominal standard deviation and the kernel bandwidth for channel $i$. Denote $\mathrm{x}_{k,i} \in \mathbb{R}^{n}$ is the $i$-th row vector of $X_k$, it follows that $e_k(i)= \frac{y_k(i)-\mathrm{x}_{k,i} \theta}{d_i}$.
   Taking partial derivative of \eqref{cl} with respect $\theta$ gives
   \begin{equation}
   \begin{aligned}
   \frac{\partial J_{CL}(\theta)}{\partial \theta}&=-\frac{1}{N}\sum_{k=1}^{N}\sum_{i=1}^{m} \frac{1}{d_i}\mathrm{x}_{k,i}^{T}G_{\sigma_i}\big(\tilde{e}_{k}(i)\big) \tilde{e}_k(i)\\
   &=-\frac{1}{N}\sum_{k=1}^{N}\sum_{i=1}^{m}\frac{1}{d_i^2} \mathrm{x}_{k,i}^{T}G_{\sigma_i}\big(\tilde{e}_{k}(i)\big) \big(y_k(i)-\mathrm{x}_{k,i} \theta\big)\\
   &=-\frac{1}{N}\sum_{k=1}^{N}X_{k}^{T} P W_k(y_k- X_k \theta)
   \end{aligned}
   \end{equation} 
   where $P$ is the precision matrix and $W_k= \operatorname{diag} \big(\{w_1,w_2,\ldots,w_m\}\big)$ with $w_i=G_{\sigma_i}\big(\tilde{e}_{k}(i)\big)$ for $i=1, 2,\ldots, m$. Setting this partial derivative to zero gives
   \begin{equation}
   \theta= \Big[  \sum_{k=1}^{N}(X_k^{T}PW_k X_k) \Big ]^{-1}\cdot \Big[ \sum_{k=1}^{N} X_k^{T}PW_k y_k \Big].
   \label{fixEqu}
   \end{equation}
   Note that both sides of the above equation contain $\theta$ ($W_k$ is a function of $\theta$). Therefore, \eqref{fixEqu} is a fixed-point equation. Then, the fixed-point iteration rule can be utilized (one can refer to \cite{c20} for the convergence of this algorithm), and thus we have 
   \begin{equation}
   \begin{aligned}
   \theta_{t+1}&=f(\theta_{t})\\
   &=\Big[  \sum_{k=1}^{N}(X_k^{T}PW_k X_k) \Big ]^{-1}\cdot \Big[ \sum_{k=1}^{N} X_k^{T}PW_k y_k \Big].
   \label{fixed}
   \end{aligned}
   \end{equation}
   where $t$ is the iteration number starting from 0 and $\theta_0$ is the initial guess of the parameter vector. This algorithm terminates when the parameter vector update is smaller than a predefined threshold $\frac{\|\theta_{t+1}-\theta_{t}\|_2} {\|\theta_t\|_2}\le \psi$. The detailed algorithm is summarized in Algorithm \ref{fps}.
   \begin{algorithm}[t]
   	\caption{Fixed-point solution for linear regression under the MKCL}
   	\label{fps}
   	\KwIn{Input $\{X_k\}_{k=1}^{N}$ and output $\{y_k\}_{k=1}^{N}$}
   	\KwOut{Parameter vector $\theta$}
   	\textbf{Initialization}: initialize $\{d_i\}_{i=1}^{m}$, $\{\sigma_i\}_{i=1}^{m}$, initial guess of the parameter vector $\theta_0$, and a threshold $\psi$\\
   	\While{$\frac{\|\theta_{t+1}-\theta_{t}\|_2} {\|\theta_t\|_2} > \psi$ or $t=0$}
   	{$t \leftarrow t+1$\\
   	$\tiny\theta_{t}=[\sum_{k=1}^{N}(X_k^{T}PW_k X_k) ]^{-1}[\sum_{k=1}^{N} X_k^{T}P W_ky_k]$
   	}
   \end{algorithm}
   \subsection{Pdf Explanation of the MKCL}
   \label{mkcpdf}
     \textcolor{black}{This section builds an explicit relationship between the MKCL and its induced noise distribution, and highlights its connection and distinction with the conventional Gaussian distribution.} It is well known that the WLS criterion in \eqref{WWLS} is optimal with $v_k \sim \mathcal{N}(0,P^{-1})$ in the sense of \emph{maximum likelihood estimation} (MLE). Actually, the MKCL criterion in \eqref{cl} is optimal under the following heavy-tailed distribution.
    \begin{theorem}
   	$J_{CL}$ in \eqref{cl} is optimal in the sense of MLE if $e_{k}(i)$ follows 
   	\begin{equation}
   		p\big(e_k(i)\big)=\frac{c_i}{d_{i}} \exp\Big(-\sigma_i^2\big(1-\exp(-\frac{e_k^2(i)}{2d_{i}^2\sigma_i^2})\big)\Big)
   		\label{pdfe}
   	\end{equation}
   	where $d_i$ is the nominal standard derivation for channel $i$, $\sigma_i$ is the kernel bandwidth, and $c_i$ is a normalization coefficient so that $p\big(e_k(i)\big)$ is a \emph{proper} pdf.
   	\label{theorempdf}
   \end{theorem}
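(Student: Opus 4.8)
The plan is to read ``optimal in the sense of MLE'' exactly as in the familiar statement, recalled just before \eqref{pdfe}, that the WLS criterion is the maximum-likelihood estimator when $v_k\sim\mathcal{N}(0,P^{-1})$: namely, to show that $\arg\min_\theta J_{CL}(\theta)$ in \eqref{cl} coincides with the ML estimate of $\theta$ when the residuals $\{e_k(i)\}$ are regarded as mutually independent samples --- over the time index $k=1,\dots,N$ and, by the diagonal-covariance reduction used throughout Section~\ref{mkclreg}, also over the channel index $i=1,\dots,m$ --- drawn from the density \eqref{pdfe}. Under this independence the joint likelihood factorizes as $\prod_{k=1}^{N}\prod_{i=1}^{m}p\big(e_k(i)\big)$, so it suffices to show that the negative log-likelihood equals $N\,J_{CL}(\theta)$ plus a term free of $\theta$.

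First I would perform the substitution $\tilde e_k(i)=e_k(i)/d_i$ already in force in \eqref{cl}, under which the inner exponential of \eqref{pdfe} becomes the Gaussian kernel $G_{\sigma_i}\big(\tilde e_k(i)\big)=\exp\!\big(-\tilde e_k^2(i)/(2\sigma_i^2)\big)$, so that
\begin{equation}\nonumber
-\log p\big(e_k(i)\big)=\sigma_i^2\Big(1-G_{\sigma_i}\big(\tilde e_k(i)\big)\Big)-\log\frac{c_i}{d_i}.
\end{equation}
Summing over $k$ and $i$, and noting that the factor $1/N$ in \eqref{cl} cancels against the $N$ terms of the sum over $k$,
\begin{equation}\nonumber
-\sum_{k=1}^{N}\sum_{i=1}^{m}\log p\big(e_k(i)\big)=N\,J_{CL}(\theta)-N\sum_{i=1}^{m}\log\frac{c_i}{d_i},
\end{equation}
where the last term is a constant once $\{d_i,\sigma_i\}$ are fixed, since each $c_i$ depends only on $d_i,\sigma_i$ and not on $\theta$. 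Hence maximizing the log-likelihood over $\theta$ is equivalent to minimizing $J_{CL}(\theta)$, i.e.\ to solving \eqref{cl}; this establishes the claim.

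The step that is more than bookkeeping is the assertion that \eqref{pdfe} is a \emph{proper} density, i.e.\ that a finite $c_i$ with $\int_{\mathbb{R}}p\big(e\big)\,de=1$ exists. Here I would examine the exponent $\sigma_i^2\big(1-\exp(-e^2/(2d_i^2\sigma_i^2))\big)$: it is monotone in $|e|$ but confined to the bounded interval $[0,\sigma_i^2]$, so the integrand neither diverges nor decays to zero as $|e|\to\infty$. This bounded, flat tail is exactly what makes the induced law heavy-tailed --- and, as $\sigma_i\to\infty$, consistent with the Gaussian via the Taylor argument of Theorem~\ref{theorem1} --- but it also means that plain integrability over all of $\mathbb{R}$ is the delicate point and must be argued with care (for instance by restricting to a large bounded observation window, or through a truncation/limiting argument). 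I expect this normalization issue, rather than the MLE algebra, to be the main obstacle; note, however, that whatever convention pins down $c_i$ it is $\theta$-independent, so the $\arg\min$ identification above is unaffected regardless.
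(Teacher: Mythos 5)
Your argument is the same as the paper's: factorize the likelihood over $k$ and $i$, take the negative logarithm, drop the $\theta$-independent term $\log(c_i/d_i)$, and rescale by $1/N$ to identify the objective with $J_{CL}$ in \eqref{cl}. Your added caution about integrability of \eqref{pdfe} over all of $\mathbb{R}$ is well placed and matches the paper's own resolution (a truncated density on a bounded error domain, as noted in the remark following the corollary and in Section~\ref{kpop}), and since $c_i$ is $\theta$-independent it does not affect the MLE equivalence.
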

\begin{proof}
	 Under the pdf in \eqref{pdfe}, the likelihood of $\theta$ given  $\{y_k\}_{k=1}^{N}$ and $\{X_k\}_{k=1}^{N}$ follows
	 \begin{equation}\nonumber
	 \begin{aligned}
	 &\mathcal{L}\Big(\theta;\{y_k\}_{k=1}^{N}, \{X_k\}_{k=1}^{N}\Big)=\prod_{k=1}^{N} \prod_{i=1}^{m}p\big({e}_k(i)\big)\\
	 &=\prod_{k=1}^{N} \prod_{i=1}^{m}\frac{c_i}{d_i}\exp\Big(-\sigma_i^2(1-\exp\big(-\frac{{e}_k^2(i)}{2d_i^2\sigma_i^2})\big)\Big).
	 \label{likelihood}
	 \end{aligned}
	 \end{equation}
	 Based on MLE, we have 
	 \begin{equation}\nonumber
	 \theta = \arg \max \mathcal{L}\Big(\theta;\{y_k\}_{k=1}^{N},\{X_{k}\}_{k=1}^{N}\Big).
	 \label{MLE}
	 \end{equation}
	 It is equivalent to minimizing its negative logarithm function with
	 \begin{equation}
	 \begin{aligned}
	 \theta &= \arg \min \sum_{k=1}^{N}\sum_{i=1}^{m} \bigg[-\log(\frac{c_i}{d_i})+\sigma_i^2(1-\exp\big(-\frac{{e}_k^2(i)}{2d_i^2\sigma_i^2})\big)\bigg]\\
	 &\overset{1)}{=}\arg \min \sum_{k=1}^{N}\sum_{i=1}^{m} \sigma_i^2(1-\exp\big(-\frac{{e}_k^2(i)}{2d_i\sigma_i^2})\big)\\
	 &\overset{2)}{=}\arg \min \sum_{i=1}^{m}\sigma_i^2 \Big(1- \frac{1}{N}\sum_{k=1}^{N}G_{\sigma_i}(\tilde{e}_{i}(k))\Big)\\
	 &= J_{CL}.
	 \label{globj}
	 \end{aligned}
	 \end{equation}
	where $1)$ is obtained since $c_i$ and $d_i$ are some constants and $2)$ is obtained by multiplying $1/N$ on the right side of the equation. This completes the proof.
\end{proof}
\begin{corollary}
	If $\sigma_i \to \infty$ in \eqref{pdfe}, $p\big({e}_k(i)\big)$ becomes a Gaussian distribution with $p\big({e}_k(i)\big) = \frac{1}{\sqrt{2\pi}d_i}\exp\Big(-\frac{{e}_{k}^2(i)}{2d_i^2}\Big)$.
\end{corollary}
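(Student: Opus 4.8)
The plan is to push the limit $\sigma_i\to\infty$ through the density \eqref{pdfe} by treating the exponential factor and the normalization constant $c_i$ separately. For the exponential factor, I would simply reuse the Taylor-series computation already carried out in the proof of Theorem~\ref{theorem1}. Setting $a_k(i):=e_k^2(i)/(2d_i^2)$, the expansion of $\exp(-a_k(i)/\sigma_i^2)$ in powers of $\sigma_i^{-2}$ gives
\[
\sigma_i^2\Big(1-\exp\big(-\tfrac{e_k^2(i)}{2d_i^2\sigma_i^2}\big)\Big)=a_k(i)-\frac{a_k^2(i)}{2\sigma_i^2}+O(\sigma_i^{-4})\;\xrightarrow[\sigma_i\to\infty]{}\;\frac{e_k^2(i)}{2d_i^2},
\]
so the exponential factor in \eqref{pdfe} converges pointwise to $\exp\!\big(-e_k^2(i)/(2d_i^2)\big)$. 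This is the routine part and is essentially identical to \eqref{limtaylor}.

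Next I would pin down the limit of the normalization constant. Since $c_i=c_i(\sigma_i)$ is defined by the properness requirement $\int p\big(e_k(i)\big)\,\mathrm{d}e_k(i)=1$, i.e. $c_i(\sigma_i)=d_i\big/\!\int_{\mathbb R}\exp\!\big(-\sigma_i^2(1-\exp(-t^2/(2d_i^2\sigma_i^2)))\big)\mathrm{d}t$, the corollary reduces to showing that this normalizing integral tends to $\int_{\mathbb R}\exp(-t^2/(2d_i^2))\,\mathrm{d}t=\sqrt{2\pi}\,d_i$, which then forces $c_i(\sigma_i)\to 1/\sqrt{2\pi}$ and hence $p\big(e_k(i)\big)\to \tfrac{1}{\sqrt{2\pi}\,d_i}\exp\!\big(-e_k^2(i)/(2d_i^2)\big)$, exactly the $\mathcal N(0,d_i^2)$ density. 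On the ``bulk'' region $|t|\le\sqrt2\,d_i\sigma_i$ this is clean: the elementary inequality $1-e^{-x}\ge x-x^2/2$ for $x\ge0$ yields $\sigma_i^2\big(1-\exp(-t^2/(2d_i^2\sigma_i^2))\big)\ge t^2/(4d_i^2)$ there, so the integrand is bounded by the fixed integrable envelope $\exp(-t^2/(4d_i^2))$ and dominated convergence applies, giving the bulk contribution $\to\sqrt{2\pi}\,d_i$.

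The main obstacle is the behaviour of the tails, and it is here that the argument needs care. The exponent in \eqref{pdfe} does not grow without bound in $|e_k(i)|$ for fixed $\sigma_i$ — it saturates at $\sigma_i^2$ as $|e_k(i)|\to\infty$ — so there is no single $\sigma_i$-independent integrable majorant on all of $\mathbb R$, and one cannot invoke dominated convergence globally; this is the price of the ``heavy tail'' that makes the MKCL robust. I would therefore make the statement precise by reading \eqref{pdfe} as a density on a bounded observation window (consistent with correntropy being a \emph{local} similarity measure) or, equivalently, as the limit of truncated densities on $[-M,M]$ with $M$ fixed; on such a window the tail region contributes negligibly for large $\sigma_i$ because there $1-e^{-x}\ge 1-e^{-1}$ drives the integrand below $e^{-(1-e^{-1})\sigma_i^2}$, and combining this with the bulk estimate gives the claimed convergence of $c_i$ and of $p$ to the Gaussian. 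Everything beyond this tail bookkeeping is the Taylor expansion of Theorem~\ref{theorem1} together with the standard Gaussian integral $\int_{\mathbb R}\exp(-t^2/(2d_i^2))\,\mathrm{d}t=\sqrt{2\pi}\,d_i$.
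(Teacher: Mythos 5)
Your proposal is correct, and it is more careful than the paper's own argument on exactly the point where care is needed. The paper's proof is essentially two lines: it substitutes the limit \eqref{limtaylor} into \eqref{pdfe} to get $\lim_{\sigma_i\to\infty}p\big(e_k(i)\big)=\frac{c_i}{d_i}\exp\big(-e_k^2(i)/(2d_i^2)\big)$ and then fixes $c_i=\frac{1}{\sqrt{2\pi}}$ by requiring the \emph{limiting} function to integrate to one; the dependence of $c_i$ on $\sigma_i$ is never tracked. You instead treat $c_i=c_i(\sigma_i)$ as defined by the properness requirement and prove convergence of the normalization itself, via a bulk/tail split of the normalizing integral, dominated convergence on the bulk (using $1-e^{-x}\ge x-x^2/2$), and the $e^{-(1-e^{-1})\sigma_i^2}$ bound on the tail. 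You also correctly flag the integrability issue the paper only concedes later in a remark: for finite $\sigma_i$ the exponent in \eqref{pdfe} saturates at $\sigma_i^2$, so the density tends to a positive constant at infinity, is not normalizable on all of $\mathbb{R}$, and must be read as truncated. One small caveat: on a fixed window $[-M,M]$ your normalizing integral converges to $\int_{-M}^{M}\exp\big(-t^2/(2d_i^2)\big)\,dt$ rather than to $\sqrt{2\pi}\,d_i$, so $c_i(\sigma_i)\to\frac{1}{\sqrt{2\pi}}$ holds only up to the Gaussian mass outside the window (exactly in the limit $M\to\infty$, or in the paper's sense of normalizing the limit function); saying this explicitly would close the loop. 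Net comparison: the paper's route is shorter but sweeps the $\sigma_i$-dependence of $c_i$ and the impropriety of \eqref{pdfe} on $\mathbb{R}$ under the rug, while yours buys a rigorous statement about the normalization constant at the cost of the extra tail bookkeeping; the pointwise part of both arguments is the same Taylor-expansion step from Theorem~\ref{theorem1}.
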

\begin{proof}
	Based on \eqref{limtaylor} and substituting corresponding results into \eqref{pdfe}, it follows that
	\begin{equation}
	\lim\limits_{\sigma_i \to \infty} p\big({e}_k(i)\big) = \frac{c_i}{d_i} \exp\Big(-\frac{{e}_{k}^2(i)}{2d_i^2}\Big). 
	\end{equation}
	Due to the fact that $\int p\big(e_k(i)\big) d e_k(i) =1 $, one has $c_i = \frac{1}{\sqrt{2\pi}}$. This completes the proof.
\end{proof}
 In one-dimensional case, a comparison of Gaussian distribution and $p(e_k)$ with $d$ and $\sigma$ is shown in Fig. \ref{pdf_ek}. One can see that its head shape is determined by $d$ while its tail shape is controlled by $\sigma$. We also observe that $p(e_k)$ approaches a Gaussian distribution when $\sigma$ is big and this evidence is consistent with Theorem \ref{theorem1}. This property makes the MKCL more versatile than the WLS since it can tackle both Gaussian and heavy-tailed problems by selecting the kernel bandwidth properly.
\begin{figure}[htbp]
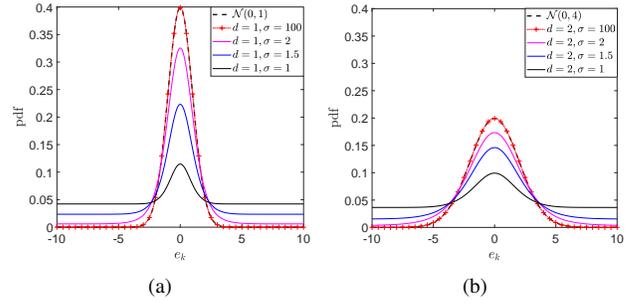

	\centering
	\subfigure[]{\includegraphics[width=1.6in]{pdf_ek1.pdf}%
		\label{pdf_ek1}}
	\subfigure[]{\includegraphics[width=1.6in]{pdf_ek2.pdf}%
		\label{pdf_ek2}}
	\caption{A comparison of Gaussian distribution and $p(e_k)$ in \eqref{pdfe} with different $d$ and $\sigma$. The error $e_k$ is assumed to be bounded within $[-10,10]$ and the coefficient $c$ is obtained by $c=1/\int_{-10}^{10} \frac{1}{d}\exp\Big(-\sigma^2\big(1-\exp(-\frac{e_k^2}{2d^2\sigma^2})\big)\Big)$.}
	\label{pdf_ek}
\end{figure}
\begin{remark}
	As visualized in Fig. \ref{pdf_ek}, the MKCL is a highly suitable loss function when the error distribution is constructed by a mixture of Gaussian distribution and uniform distribution. Additionally, it is an appealing choice when the heavy tail varies across different channels. In a practical implementation, a truncated distribution of $p\big(e_k\big)$ is utilized (i.e., the feasible domain of the error is bounded) so that $c_i$ can be numerically calculated. Note that this distribution approaches a uniform distribution when $\sigma_i \to 0_{+}$ and the corresponding question would become an $\ell_0$ norm optimization problem which is N-P hard.
\end{remark}

The likelihood function is a way to measure how well a statistical model explains the training set. To illustrate the versatility of the MKCL over the WLS, we assume that the practical residuals $e_k$ follow 
\begin{equation}
	e_k \sim (1-p)\mathcal{N}(0,1) + p\mathcal{U}(-20,20), 0 \le p<0.5
	\label{residual}
\end{equation}
where $\mathcal{N}(0,1)$ is the nominal Gaussian distribution, $\mathcal{U}(-20,20)$ is a uniform distribution with boundary $[-20,20]$, and $p$ is a probability that determines $e_k$ generated by which distribution. Then, we compare the average logarithm likelihood functions $\log \mathcal{L}_{CL}$ and $\log \mathcal{L}_{WLS}$. The corresponding results are shown in Fig. \ref{lh_cmp} and Fig. \ref{lh_cmp_p}. From Fig. \ref{lh_cmp}, one can see that the MKCL is better than the WLS no matter what kernel bandwidth is used with $p=0.2$. From Fig. \ref{lh_cmp_p}, one can see that $\log \mathcal{L}_{CL} \ge \log \mathcal{L}_{WLS}$ always holds when the kernel bandwidth $\sigma$ is optimized with $\sigma^{*}= \arg \max \log \mathcal{L}_{CL}$ (we will introduce the optimization algorithm in the following section) and the net profit of the MKCL over the WLS increases incrementally with the growth of $p$.
\begin{figure}[htbp]
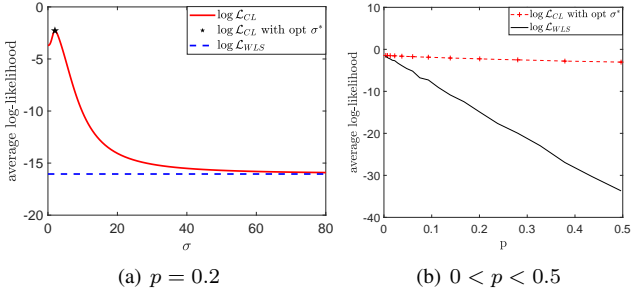

	\centering
	\subfigure[$p=0.2$]{\includegraphics[width=1.75in]{lh_cmp.pdf}%
		\label{lh_cmp}}
	\subfigure[$0 < p <0.5$]{\includegraphics[width=1.5in]{lh_cmp_p.pdf}%
		\label{lh_cmp_p}}
	\caption{A comparison of $\log \mathcal{L}_{CL}$ and $\log \mathcal{L}_{WLS}$ when the resuduals follow \eqref{residual}. The logarithm likelihood functions are obtained by $\log\mathcal{L}_{CL}=\frac{1}{N}\sum_{k=1}^{N}p(e_k)$ and $\log\mathcal{L}_{WLS}=\frac{1}{N}\sum_{k=1}^{N}\frac{1}{\sqrt{2\pi}d}\exp\big(-\frac{{e}_{k}^2}{2d^2}\big)$. Moreover, we use $d=1$ in the simulation of Fig. \ref{lh_cmp} and Fig. \ref{lh_cmp_p}. The left figure investigates the values of $\log \mathcal{L}_{CL}$ and $\log \mathcal{L}_{WLS}$ with different $\sigma$ while the rights figure compares $\log \mathcal{L}_{CL}$ and $\log \mathcal{L}_{WLS}$ with respect to different $p$ using optimized $\sigma^{*}$.}
	\label{lh}
\end{figure}
\subsection{Robustness Analysis}
\label{robustness}
We investigate the robustness of a scalar regression problem under the MKCL, i.e., $y_k, X_k, v_k \in \mathbb{R}$ in \eqref{sys}. Specifically, we focus on deriving an upper error bound when outputs are corrupted by outliers with possibly arbitrarily large amplitudes. In the scalar case, the MKCL becomes
\begin{equation}
J_{CL}(\theta) = \sigma^2 \Big(1- \frac{1}{N}\sum_{k=1}^{N}G_{\sigma}\big(\tilde{e}_k\big)\Big)
\label{cl1d}
\end{equation}
with $\tilde{e}_k = \frac{y_k-X_k\theta}{d}= \frac{X_k(\theta^{o}-\theta)+v_k}{d} \in \mathbb{R}$ where $v_k$ can be either a dense noise with a small amplitude or an outlier with a large amplitude. To proceed, we denote $\mathrm{v}_k=\frac{v_k}{d}$ as the normalized noise, $\varepsilon \ge 0$ as a non-negative number,  $I_{N}=\{1,2,\cdots,N\}$ as the sample index set, and $I(\varepsilon)$ as a subset of $I_{N}$ satisfying $\forall k, \mathrm{v}_k  \le \varepsilon$. Moreover, we present the following assumptions:
\begin{assumption}
	$N>|I(\varepsilon)|=M>N/2$ where $|I(\varepsilon)|$ denotes the cardinality of the set $I(\varepsilon)$.
	\label{assump1}
\end{assumption}
\begin{assumption}
	$\exists \zeta>0, s.t., |X_k|>\zeta.$
\end{assumption}
\begin{remark}
	Assumption \ref{assump1} indicates that more than half of the normalized noises $\mathrm{v}_k$ are bounded by $\varepsilon$ while the others can be extremely large (i.e., $\frac{|v_k|}{d} >> \varepsilon$).
\end{remark}
    Then, we have the following theorem.
\begin{theorem}
	If $\sigma> \varepsilon/\sqrt{2\log[{M/(N-M)}]}$, then the optimal solution $\theta$ under $\arg \min J_{CL}(\theta)$ in \eqref{cl1d} satisfies $ |\theta-\theta^{o}| \le \xi $ where 
		\begin{equation}
		\small
		\xi=\frac{d}{\zeta}\left(\sqrt{-2 \sigma^2 \log \left(\exp \left(-\frac{\varepsilon^2}{2 \sigma^2}\right)-\frac{N-M}{M}\right)}+\varepsilon\right).
		\label{bound}
		\end{equation}
	\label{Theorem3}
\end{theorem}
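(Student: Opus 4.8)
The plan is to use the optimality of $\theta$ only through the single inequality $J_{CL}(\theta)\le J_{CL}(\theta^{o})$, which in the scalar setting \eqref{cl1d} is equivalent to $\sum_{k=1}^{N}G_{\sigma}\big(\tilde e_k(\theta)\big)\ge\sum_{k=1}^{N}G_{\sigma}(\mathrm v_k)$, and then to sandwich the two sums. First I would lower-bound the value at the truth: since $G_{\sigma}$ is even and strictly decreasing in $|\cdot|$ and $|\mathrm v_k|\le\varepsilon$ for the $M$ indices $k\in I(\varepsilon)$, discarding the $N-M$ nonnegative outlier terms gives $\sum_{k}G_{\sigma}(\mathrm v_k)\ge M\exp\!\big(-\varepsilon^{2}/(2\sigma^{2})\big)$. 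Hence any minimizer satisfies $\sum_{k}G_{\sigma}\big(\tilde e_k(\theta)\big)\ge M\exp\!\big(-\varepsilon^{2}/(2\sigma^{2})\big)$. (At the outset I would also make explicit the evidently intended reading of the definition of $I(\varepsilon)$ as $|\mathrm v_k|\le\varepsilon$, consistent with Remark~4, since the proof uses the two-sided bound.)

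Next I would upper-bound the left-hand sum under the working hypothesis that $\theta$ is far from $\theta^{o}$. Put $\rho:=\zeta|\theta-\theta^{o}|/d-\varepsilon$. For each $k\in I(\varepsilon)$, the triangle inequality with Assumption~2 and $|\mathrm v_k|\le\varepsilon$ yields $|\tilde e_k(\theta)|\ge |X_k||\theta-\theta^{o}|/d-|\mathrm v_k|\ge\rho$, so that, when $\rho>0$, $G_{\sigma}\big(\tilde e_k(\theta)\big)\le\exp\!\big(-\rho^{2}/(2\sigma^{2})\big)$ on those $M$ indices, while the remaining $N-M$ terms are trivially bounded by $1$. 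This gives $\sum_{k}G_{\sigma}\big(\tilde e_k(\theta)\big)\le M\exp\!\big(-\rho^{2}/(2\sigma^{2})\big)+(N-M)$, and combining with the lower bound from the first step produces $M\exp\!\big(-\rho^{2}/(2\sigma^{2})\big)\ge M\exp\!\big(-\varepsilon^{2}/(2\sigma^{2})\big)-(N-M)$.

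The bandwidth hypothesis enters precisely here: $\sigma>\varepsilon/\sqrt{2\log[M/(N-M)]}$ is equivalent to $\exp\!\big(-\varepsilon^{2}/(2\sigma^{2})\big)-(N-M)/M>0$, which is exactly what allows one to divide by $M$ and take logarithms in the last display. Doing so gives $\rho^{2}\le-2\sigma^{2}\log\!\big(\exp(-\varepsilon^{2}/(2\sigma^{2}))-(N-M)/M\big)$; since the argument of the logarithm lies in $(0,1]$ the right-hand side is nonnegative, so taking square roots and substituting back $\rho=\zeta|\theta-\theta^{o}|/d-\varepsilon$ recovers $|\theta-\theta^{o}|\le\xi$ with $\xi$ as in \eqref{bound}. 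Finally I would dispose of the remaining case $\rho\le0$: there $\zeta|\theta-\theta^{o}|/d\le\varepsilon$, hence $|\theta-\theta^{o}|\le d\varepsilon/\zeta\le\xi$ because the square-root term in $\xi$ is nonnegative, which closes the argument.

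I do not expect a genuine obstacle in this proof; the only delicate points are the case split on the sign of $\rho$ and the observation that the stated threshold on $\sigma$ is exactly the feasibility condition for inverting the final exponential inequality (equivalently, for $\xi$ to be well defined). Everything else reduces to monotonicity and symmetry of the Gaussian kernel together with the triangle inequality applied on the majority index set $I(\varepsilon)$.
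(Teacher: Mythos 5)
Your proposal is correct and is essentially the paper's argument in direct rather than contrapositive form: the paper assumes $|\theta-\theta^{o}|>\xi$ and shows $J_{CL}(\theta)>J_{CL}(\theta^{o})$, whereas you start from the minimizer inequality $J_{CL}(\theta)\le J_{CL}(\theta^{o})$ and solve for $|\theta-\theta^{o}|$, but both rest on exactly the same estimates (each inlier term at $\theta^{o}$ is at least $\exp(-\varepsilon^{2}/(2\sigma^{2}))$, each outlier term at $\theta$ is at most $1$, and the triangle inequality with Assumption~2 lower-bounds $|\tilde e_k|$ on $I(\varepsilon)$), with the bandwidth condition entering in the same place to make the logarithm legitimate. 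Your explicit case split on the sign of $\rho$ and the reading $|\mathrm v_k|\le\varepsilon$ in the definition of $I(\varepsilon)$ are both sound and correspond to what the paper handles implicitly.
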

\begin{proof}
	To prove $|\theta-\theta^{o}| \le \xi$ under $\arg \min J_{CL}(\theta)$, it suffices to prove $J_{CL}(\theta)>J_{CL}(\theta^{o})$ for any $\theta$ satisfying $|\theta-\theta^{o}|> \xi$. As $\sigma> \varepsilon/\sqrt{2\log[{M/(N-M)}]}$, it follows that $- \frac{\varepsilon^2}{2\sigma^2} > -\log [M /(N-M)]$ and 
	$\exp(-\frac{\varepsilon^2}{2\sigma^2})> \exp(-\log [M /(N-M)])=\frac{N-M}{M}$. Then, we obtain 
	\begin{equation}
	0<\exp(-\frac{\varepsilon^2}{2\sigma^2})-\frac{N-M}{M}<1
	\end{equation}
	since $\exp(-\frac{\varepsilon^2}{2\sigma^2})\le 1$ and $0<\frac{N-M}{M}<1$. Further, if $|\theta-\theta^{o}|> \xi$, we have $\forall k \in I (\varepsilon)$
	\begin{equation}
	\begin{aligned}
	|\tilde{e}_k| &= \frac{1}{d}|X_k(\theta-\theta^{o})+v_k|\\
	&\overset{3)}{\ge}  \frac{1}{d}\left(|X_k||{\theta}-\theta^{o}|-|v_k|\right)\\
	&\overset{4)}{>}\frac{1}{d}\zeta \xi-\frac{1}{d}|v_k|\\
	&\overset{5)}{>}\frac{1}{d}\zeta \xi-\varepsilon\\
	&=\sqrt{-2 \sigma^2 \log \left(\exp \left(-\frac{\varepsilon^2}{2 \sigma^2}\right)-\frac{N-M}{M}\right)}
	\end{aligned}
	\end{equation}
	where 3) comes from $|X_k(\theta-\theta^{o})+v_k| \ge |X_k(\theta-\theta^{o})|-|v_k|=|X_k||\theta-\theta^{o}|-|v_k|$, 4) comes from $|X_k|>\zeta$ and $|\theta-\theta^{o}|> \xi$, and 5) comes from $\frac{1}{d} |v_k| \le \varepsilon$. Thus, 
	\begin{equation}
	\small
	\begin{aligned}
	\exp\left(-\frac{\tilde{e}_k^2}{2\sigma^2}\right) &< \exp\left(-\frac{-2 \sigma^2 \log \left(\exp \left(-\frac{\varepsilon^2}{2 \sigma^2}\right)-\frac{N-M}{M}\right)}{2\sigma^2}\right)\\
	&=\exp \left(-\frac{\varepsilon^2}{2 \sigma^2}\right)-\frac{N-M}{M},~\forall k \in I(\varepsilon).
	\end{aligned}
	\end{equation}
	Then, we arrive at
	\begin{equation}
	\footnotesize
	\begin{aligned}
	J_{GL}(\theta)&=\sigma^2-\frac{\sigma^2}{N}\Bigg [\sum \limits_{k \in I(\varepsilon)} \exp \left(-\frac{\tilde{e}_k^2}{2\sigma^2}\right) + \sum \limits_{k \notin I(\varepsilon)} \exp\left(-\frac{\tilde{e}_k^2}{2\sigma^2}\right) \Bigg] \\
	& >\sigma^2-\frac{\sigma^2}{N}\Bigg [ \sum \limits_{k \in I(\varepsilon)} \Bigg(\exp \left(-\frac{\varepsilon^2}{2 \sigma^2}\right)-\frac{N-M}{M} \Bigg)\\&
	+  \sum \limits_{k \notin \small{I(\varepsilon)}} \exp\left(-\frac{\tilde{e}_k^2}{2\sigma_2}\right) \Bigg ]\\
	& > \sigma^2 -\frac{\sigma^2}{N}\Bigg [ \sum \limits_{k \in I(\varepsilon)} \left ( \exp \left(-\frac{\varepsilon^2}{2 \sigma^2}\right)-\frac{N-M}{M} \right )+  N-M\Bigg ]\\
	& = \sigma^2 -\frac{\sigma^2}{N}\Bigg [\sum \limits_{k \in I(\varepsilon)} \exp \left(-\frac{\varepsilon^2}{2 \sigma^2}\right) \Bigg ]\\
	& > \sigma^2 -\frac{\sigma^2}{N}\Bigg [\sum \limits_{k \in I(\varepsilon)} \exp \left(-\frac{v_k^2}{2 d^2\sigma^2}\right) \Bigg ]\\
	&> \sigma^2 -\frac{\sigma^2}{N}\Bigg [\sum \limits_{i=1}^{N} \exp \left(-\frac{ v_k^2}{2 d^2\sigma^2}\right) \Bigg ]
	= J_{GL}(\theta^{o}).
	\end{aligned}
	\end{equation}
	This completes the proof.
\end{proof}
\begin{prop}
	The bound $\xi$ first decreases and then increases with the growth of $\sigma$  under $\sigma \in (\varepsilon/\sqrt{2\log[{M/(N-M)}]},\infty)$ and $\varepsilon>0$.
	\label{prop1}
\end{prop}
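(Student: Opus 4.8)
The plan is to collapse the apparently two-parameter expression~\eqref{bound} to a single scalar function via the substitution $s:=\varepsilon^{2}/(2\sigma^{2})$, and then establish strict unimodality of that function. Write $a:=(N-M)/M$; Assumption~\ref{assump1} gives $0<a<1$, and $\log[M/(N-M)]=\log(1/a)$, so the admissible range $\sigma\in(\varepsilon/\sqrt{2\log[M/(N-M)]},\infty)$ is carried by the smooth, strictly decreasing bijection $\sigma\mapsto s$ onto $s\in(0,\log(1/a))$. A direct computation gives $-2\sigma^{2}\log\!\bigl(e^{-\varepsilon^{2}/(2\sigma^{2})}-a\bigr)=\varepsilon^{2}\phi(s)$ with
\[
\phi(s):=\frac{-\log(e^{-s}-a)}{s},
\]
hence $\xi=\frac{d\varepsilon}{\zeta}\bigl(1+\sqrt{\phi(s)}\bigr)$. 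Since $t\mapsto\sqrt{t}$ is increasing and the prefactor is a positive constant, it suffices to show that $\phi$ is strictly decreasing on some $(0,s^{\ast})$ and strictly increasing on $(s^{\ast},\log(1/a))$; transporting this back through the order-reversing change of variables then yields the proposition, with minimizer $\sigma^{\ast}=\varepsilon/\sqrt{2s^{\ast}}$.

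To analyze $\phi$, set $N(s):=-\log(e^{-s}-a)>0$, so that $\phi=N/s$ and $\operatorname{sign}\phi'(s)=\operatorname{sign}\psi(s)$ where $\psi(s):=sN'(s)-N(s)$ (indeed $\phi'=\psi/s^{2}$). The key simplification is the identity $\psi'(s)=sN''(s)$, the $N'$ terms cancelling. Using $N'(s)=e^{-s}/(e^{-s}-a)$ and $N''(s)=ae^{-s}/(e^{-s}-a)^{2}>0$, I conclude $\psi'>0$ on $(0,\log(1/a))$, so $\psi$ is strictly increasing and has at most one zero there. Evaluating the left endpoint gives $\psi(0^{+})=\log(1-a)<0$ (the inequality $1-a>0$ being exactly $M>N/2$), while $\phi\to+\infty$ at both endpoints of $(0,\log(1/a))$ — at $s\to0^{+}$ because $N(s)\to-\log(1-a)>0$ while $s\to0$, and at $s\to\log(1/a)^{-}$ because $N(s)\to+\infty$ while $s$ stays bounded away from $0$. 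A strictly monotone $\phi$ cannot have both one-sided endpoint limits equal to $+\infty$; combined with ``$\psi$ has at most one zero and $\psi(0^{+})<0$'', this forces $\psi$ to have exactly one zero $s^{\ast}\in(0,\log(1/a))$, with $\psi<0$ on $(0,s^{\ast})$ and $\psi>0$ on $(s^{\ast},\log(1/a))$. Hence $\phi$ is strictly decreasing on $(0,s^{\ast})$ and strictly increasing on $(s^{\ast},\log(1/a))$.

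Finally I translate this back. For $\sigma\in(\varepsilon/\sqrt{2\log[M/(N-M)]},\sigma^{\ast}]$ we have $s(\sigma)\in[s^{\ast},\log(1/a))$, where $\phi$ is increasing; since $s(\cdot)$ is decreasing, $\phi(s(\sigma))$ and therefore $\xi(\sigma)$ are strictly decreasing on this interval. Symmetrically, for $\sigma\in[\sigma^{\ast},\infty)$ we have $s(\sigma)\in(0,s^{\ast}]$, where $\phi$ is decreasing, so $\phi(s(\sigma))$ and $\xi(\sigma)$ are strictly increasing. This is precisely the claim of Proposition~\ref{prop1}.

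The construction presents no genuine obstacle once the substitution $s=\varepsilon^{2}/(2\sigma^{2})$ is made: the cancellation yielding $\psi'(s)=sN''(s)$ makes the sign analysis of $\phi'$ transparent, and $N''>0$ is immediate. The only point needing a little care is excluding the degenerate case in which $\psi$ never changes sign; rather than a delicate asymptotic estimate of $\psi$ near $s=\log(1/a)$, I rule it out by the softer observation that $\phi$ cannot be globally monotone because both of its endpoint limits equal $+\infty$.
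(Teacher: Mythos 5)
Your proof is correct, and it takes a somewhat different route from the paper's. The paper differentiates $\xi$ directly in $\sigma$, arriving at $\frac{\partial \xi}{\partial \sigma}\propto 1-\frac{p\log p}{(p-q)\log(p-q)}$ with $p=\exp(-\varepsilon^2/(2\sigma^2))$ and $q=(N-M)/M$, and then \emph{asserts} that the ratio $\frac{p\log p}{(p-q)\log(p-q)}$ is monotonically decreasing in $p$ on $(q,1)$ with limits $+\infty$ and $0$, so that the derivative changes sign exactly once from negative to positive; that monotonicity claim is stated as obvious and never proved. You instead change variables to $s=\varepsilon^2/(2\sigma^2)$, rewrite $\xi$ as an increasing transform of $\phi(s)=N(s)/s$ with $N(s)=-\log(e^{-s}-a)$, and obtain the single sign change of $\phi'$ rigorously from the cancellation $\psi'(s)=sN''(s)>0$ (convexity of $N$) together with $\psi(0^+)=\log(1-a)<0$ and the soft endpoint argument ($\phi\to+\infty$ at both ends rules out $\psi<0$ throughout). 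In effect you supply a complete proof of exactly the step the paper leaves unargued, and you additionally get strict unimodality and a characterization of the minimizer $s^\ast$ (equivalently $\sigma^\ast=\varepsilon/\sqrt{2s^\ast}$), which matches the paper's corollary that the optimal bandwidth occurs where its $\phi(p)=1$ (your condition $\psi(s^\ast)=0$ is the same equation after the substitution $p=e^{-s}$, $q=a$). The paper's calculation is shorter on the page but rests on an unverified monotonicity; yours is self-contained, with the only extra care being the exclusion of the degenerate no-sign-change case, which you handle correctly.
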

\begin{proof}
	For $\sigma \in (\varepsilon/\sqrt{2\log[{M/(N-M)}]},\infty)$, taking partial derivative of $\xi$ with respect to $\sigma$ gives
	\begin{equation}
	\small
	\begin{aligned}
	&\frac{\partial \xi}{\partial \sigma}=\frac{d}{\zeta} \Bigg [ \sqrt{-2\log \Big(\exp(-\frac{\varepsilon^2}{2\sigma^2})-\frac{N-M}{M}\Big)} -\\
	&\frac{\varepsilon^2/\sigma^2 \exp(-\frac{\varepsilon^2}{2\sigma^2})}{ \Big(\exp(-\frac{\varepsilon^2}{2\sigma^2})-\frac{N-M}{M}\Big) \sqrt{-2\log \Big(\exp(-\frac{\varepsilon^2}{2\sigma^2})-\frac{N-M}{M}\Big)}} \Bigg].
	\end{aligned}
	\end{equation}
	Denote $p = \exp(-\frac{\varepsilon^2}{2\sigma^2})$ and $q=(N-M)/M$ where $0<q<p<1$, one has
	\begin{equation}
	\small
	\begin{aligned}
	\frac{\partial \xi}{\partial \sigma}&=\frac{d}{\zeta}\bigg(\sqrt{-2\log (p-q)} + \frac{2 p \log(p)}{(p-q)\sqrt{-2\log (p-q)}}\bigg)\\
	&=\frac{d\sqrt{-2\log (p-q)}}{\zeta}\bigg(1- \frac{p \log(p)}{(p-q)\log(p-q)}\bigg).
	\label{dxi}
	\end{aligned}
	\end{equation}
	It is obvious that $\phi(p)=\frac{p \log(p)}{(p-q)\log(p-q)}$ is a monotonically decreasing function of $p$ with $0<q<p<1$. Moreover, $\phi(0_{+}) \to \infty$ and $\phi(1_{-}) \to 0$ and $p$ is a monotonically increasing function of $\sigma$ with $\sigma>0$. Note that $\frac{d\sqrt{-2\log (p-q)}}{\zeta}>0$ always hold. This implies that $\frac{\partial \xi}{\partial \sigma}$ starts from a negative value to a positive value with the growth of $\sigma$. Therefore, $\xi$ first decreases and then increases with the growth of $\sigma$.
\end{proof}
\begin{corollary}
	The ``optimal" kernel bandwidth is obtained when $\phi(p)=1$ in the sense of the lowest upper error bound $\xi$.
\end{corollary}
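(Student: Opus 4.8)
The plan is to read the statement off directly from the derivative computed in the proof of Proposition~\ref{prop1}. There it was shown that, on the admissible range $\sigma\in(\varepsilon/\sqrt{2\log[M/(N-M)]},\infty)$,
\begin{equation}\nonumber
\frac{\partial \xi}{\partial \sigma}=\frac{d\sqrt{-2\log (p-q)}}{\zeta}\bigl(1-\phi(p)\bigr),
\end{equation}
with $p=\exp(-\varepsilon^2/(2\sigma^2))$, $q=(N-M)/M$, $0<q<p<1$, and $\phi(p)=\frac{p\log p}{(p-q)\log(p-q)}$ as in \eqref{dxi}. Since the prefactor $\frac{d}{\zeta}\sqrt{-2\log(p-q)}$ is strictly positive throughout this interval, the sign of $\partial\xi/\partial\sigma$ agrees with the sign of $1-\phi(p)$. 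Consequently every stationary point of $\xi(\sigma)$ must satisfy $\phi(p)=1$, which is the claimed optimality condition; the remaining work is only to confirm that this stationary point exists, is unique, and is a global minimiser.

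For this I would invoke the monotonicity facts already established: $\phi$ is strictly decreasing on $(q,1)$ with $\phi(q_{+})\to\infty$ and $\phi(1_{-})\to 0$, so by the intermediate value theorem there is a unique $p^{*}\in(q,1)$ with $\phi(p^{*})=1$. Because $p=\exp(-\varepsilon^2/(2\sigma^2))$ is a strictly increasing bijection from the admissible $\sigma$-interval onto $(q,1)$, there is a unique corresponding bandwidth $\sigma^{*}$ with $\phi(p(\sigma^{*}))=1$. Proposition~\ref{prop1} further tells us that $\phi(p(\sigma))$ decreases in $\sigma$, so $\partial\xi/\partial\sigma$ is negative for $\sigma<\sigma^{*}$ and positive for $\sigma>\sigma^{*}$; hence $\xi$ is strictly decreasing then strictly increasing, and $\sigma^{*}$ is its unique global minimiser. (Noting $\xi\to\infty$ at both endpoints of the interval — as $p-q\to 0_{+}$ on the left and as $\sigma^2\to\infty$ on the right — rules out the minimum escaping to the boundary.) Thus the ``optimal'' kernel bandwidth, in the sense of the smallest guaranteed bound $\xi$, is exactly the one solving $\phi(p)=1$.

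I do not expect a genuine obstacle here, since the corollary is essentially the first-order condition for the unimodal function $\xi(\sigma)$ whose shape was already pinned down in Proposition~\ref{prop1}; the only point requiring a line of care is checking that the root of $\phi(p)=1$ lies in the image $(q,1)$ of the admissible $\sigma$-range — which follows from the boundary behaviour $\phi(q_{+})\to\infty$ and $\phi(1_{-})\to 0$ — so that the optimal bandwidth is interior and therefore attained.
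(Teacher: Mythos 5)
Your proposal is correct and takes essentially the same route the paper intends: the corollary is stated there without proof precisely because it is the first-order condition read directly off the sign factorization $\frac{\partial \xi}{\partial \sigma}=\frac{d\sqrt{-2\log (p-q)}}{\zeta}\bigl(1-\phi(p)\bigr)$ in \eqref{dxi} together with the monotonicity of $\phi$ established in Proposition~\ref{prop1}, which is exactly what you exploit. Your extra care — that $p=\exp(-\varepsilon^2/(2\sigma^2))$ maps the admissible $\sigma$-range bijectively onto $(q,1)$ and that $\phi(q_{+})\to\infty$, $\phi(1_{-})\to 0$ (the paper loosely writes $\phi(0_{+})\to\infty$) guarantee a unique interior root of $\phi(p)=1$ — simply makes the omitted argument explicit.
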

\begin{prop}
	The absolute upper error $\xi$ is unbounded under the WLS criterion [this corresponds to $\sigma \to \infty$ in \eqref{bound} based on Theorem \ref{theorem1}] for a scalar regression problem.
\end{prop}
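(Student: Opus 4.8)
The plan is to treat the right-hand side of \eqref{bound} as a function $\xi(\sigma)$ of the kernel bandwidth and show that $\xi(\sigma)\to\infty$ as $\sigma\to\infty$. Since Theorem~\ref{theorem1} identifies the WLS criterion with the limiting case $\sigma\to\infty$ of the MKCL, this immediately shows that the error certificate supplied by Theorem~\ref{Theorem3} degenerates (becomes unbounded) for WLS in the scalar regression setting. Concretely, I would isolate the argument of the logarithm, $g(\sigma):=\exp\!\big(-\varepsilon^2/(2\sigma^2)\big)-\tfrac{N-M}{M}$, and note that as $\sigma\to\infty$ we have $\varepsilon^2/(2\sigma^2)\to 0$, hence $\exp\!\big(-\varepsilon^2/(2\sigma^2)\big)\to 1$ and $g(\sigma)\to 1-\tfrac{N-M}{M}=\tfrac{2M-N}{M}$. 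By Assumption~\ref{assump1} we have $M>N/2$, so this limit lies strictly in $(0,1)$, and therefore $\log g(\sigma)\to\log\tfrac{2M-N}{M}=:-c$ for some finite constant $c>0$ (this argument is valid for both $\varepsilon>0$ and $\varepsilon=0$).

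Next I would combine this with the $\sigma^2$ prefactor inside the square root: $-2\sigma^2\log g(\sigma)=2c\,\sigma^2\,(1+o(1))\to+\infty$, so $\sqrt{-2\sigma^2\log g(\sigma)}\to+\infty$; adding the nonnegative constant $\varepsilon$ and multiplying by the positive constant $d/\zeta$ preserves this divergence, so $\xi(\sigma)\to\infty$. Invoking Theorem~\ref{theorem1} to identify WLS with the $\sigma\to\infty$ limit then yields the claim. I would also remark that this is entirely consistent with Proposition~\ref{prop1}: $\xi$ is eventually increasing in $\sigma$, so its supremum over the admissible range $\big(\varepsilon/\sqrt{2\log[M/(N-M)]},\infty\big)$ is $+\infty$, which is exactly the statement being proved.

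There is essentially no serious obstacle; the only point requiring a moment of care is the interplay of the two $\sigma$-dependent pieces inside the radical — the $\sigma^2$ factor blows up while $\log g(\sigma)$ must be shown to stay bounded away from $0$. This is precisely where Assumption~\ref{assump1} enters: $M>N/2$ forces $\tfrac{2M-N}{M}<1$, so $\log g(\sigma)$ tends to a strictly negative finite limit rather than to $0$, and the crude estimate above is enough to conclude divergence of the product. (If one instead had $M=N/2$, then $g(\sigma)\to 0$ and the product $\sigma^2\log g(\sigma)$ would call for a finer expansion, but that degenerate case is excluded by the standing assumption.)
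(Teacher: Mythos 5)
Your proposal is correct, and it reaches the conclusion by a genuinely different (and more elementary) route than the paper. You compute the limit of the bound itself: as $\sigma\to\infty$ the argument of the logarithm tends to $\tfrac{2M-N}{M}\in(0,1)$ (this is exactly where Assumption~\ref{assump1} enters, as you note), so $\log g(\sigma)$ stays bounded away from $0$ while the $\sigma^2$ prefactor diverges, giving $\xi(\sigma)\to\infty$ directly. The paper instead reuses the machinery from the proof of Proposition~\ref{prop1}: it takes $\sigma\to\infty$ in the derivative expression \eqref{dxi}, observes $p(\sigma)\to1$ and $\phi(1)=0$, concludes that $\partial\xi/\partial\sigma$ tends to the positive constant $\tfrac{d}{\zeta}\sqrt{-2\log(1-q)}$, and hence that $\xi$ grows asymptotically linearly in $\sigma$ and is therefore unbounded. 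Your direct limit argument is self-contained and avoids any appeal to the derivative formula; the paper's derivative argument buys slightly more information (the linear growth rate of $\xi$ in $\sigma$) at the cost of relying on \eqref{dxi}, and your remark that the case $\varepsilon=0$ is also covered is a small extra generality, since Proposition~\ref{prop1} itself is stated only for $\varepsilon>0$. Both arguments establish the stated claim; there is no gap in yours.
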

\begin{proof}
	According to Theorem \ref{theorem1}, $J_{CL}$ becomes $J_{WLS}$ as $\sigma \to \infty$. Meanwhile, we have $\lim\limits_{\sigma \to \infty} p (\sigma)=1$ and $\phi(1) = 0$ according to the definition of $p(\cdot)$ and $\phi(\cdot)$ in \eqref{dxi}. This implies that $\lim\limits_{\sigma \to \infty}  \frac{\partial \xi}{\partial \sigma} = \frac{d\sqrt{-2\log (1-q)}}{\xi}$ which is a constant. Therefore, the bound $\xi$ would increase linearly with $\sigma$ when $\sigma$ is large and hence unbounded. This completes the proof. 
\end{proof}
\begin{remark}
	The kernel bandwidth $\sigma$ is a critical parameter in problems under the MKC. It should be neither too small nor not too big to minimize the upper error bound. 
\end{remark}
\subsection{Correntropy Parameters Optimization}
\label{kpop}
\begin{algorithm}[!h]
	\caption{Correntropy parameter optimization}
	\label{kpo}
	\KwIn{Parameter vector $\theta$, $\{X_k\}_{k=1}^{N}$ and $\{y_k\}_{k=1}^{N}$}
	\KwOut{Correntropy parameters $\{\sigma\}_{i=1}^{m}$ and $\{d_i\}_{i=1}^{m}$}
	\textbf{Initialization}: Obtain $\{e_k\}_{k=1}^{N} =\{ y_k - f(x_k, \bm{\theta})\}_{k=1}^{N}$, initialize approximated Hessian matrix $H_0$, thresholds $\nu_1$, $\nu_2$, and maximum iteration number $t_{iter}$\\
	\tcc{Optimize correntropy parameters for channel i}
	\For{$i \leftarrow 1$ \KwTo $m$ }
	{ \tcc{Solve $\mathbf{z} = \arg\min f(\mathbf{z})$ as shown in \eqref{obji} with $\mathbf{z} \triangleq (\sigma_i,d_{i})$ using BFGS}  
		\While{$t \le t_{iter}$}
		{Obtain the direction vector $\mathbf {p} _{t}=-H_{t}\nabla f(\mathbf {z} _{t})$\\
			Obtain step length $\alpha_k$ by line search $
			\arg\min  \alpha _{t}=\arg \min f(\mathbf {z} _{t}+\alpha \mathbf {p} _{t})$\\
			$ \mathbf {z} _{t+1}=\mathbf {z} _{t}+\mathbf {s} _{t}$ with $\mathbf {s} _{t}=\alpha _{k}\mathbf {p} _{t}$\\
			$\mathbf {y} _{t}={\nabla f(\mathbf {z} _{t+1})-\nabla f(\mathbf {z} _{t})}$ \\
			$H_{t+1}=H_{t}+{\frac {(\mathbf {s} _{t}^{\mathrm {T} }\mathbf {y} _{t}+\mathbf {y} _{t}^{\mathrm {T} }H_{t}\mathbf {y} _{t})(\mathbf {s} _{t}\mathbf {s} _{t}^{\mathrm {T} })}{(\mathbf {s} _{t}^{\mathrm {T} }\mathbf {y} _{t})^{2}}}-{\frac {H_{t}\mathbf {y} _{t}\mathbf {s} _{t}^{\mathrm {T} }+\mathbf {s} _{t}\mathbf {y} _{t}^{\mathrm {T} }H_{t}}{\mathbf {s} _{t}^{\mathrm {T} }\mathbf {y} _{t}}}$\\
			$t \leftarrow t+1$\\
			\tcc{stop criterion}
			\If{$\|\nabla f(\mathbf {z} _{t+1})\|_2 \le \nu_1$ or $\|\mathbf{s}_{t}\|_2 \le \nu_2$}{
				$Z_i=\mathbf {z} _{t+1}$\\
				\Return}
		}
	}
\end{algorithm}
\textcolor{black}{After building the relationship between MKCL and its induced distribution in Section \ref{mkcpdf}, a remaining question is to optimize the correntropy parameters $\{\sigma_i, d_i\}_{i=1}^{m}$ so that MKCL-induced pdf matches with the practical one.} Based on MLE, one can construct the following problem:
\begin{equation}
\begin{aligned}
&\arg \max \mathcal{L}\Big(\{\sigma_i\}_{i=1}^{m},\{d_i\}_{i=1}^{m},\theta;\{y(k)\}_{k=1}^{N},\{X_k\}_{k=1}^{N}\Big)\\
&=\arg \max \prod_{k=1}^{N}\prod_{i=1}^{m}p\big(e_k(i)\big)\\
&\overset{6)}{=}\arg \min -\sum_{k=1}^{N}\sum_{i=1}^{m}\log \big[p\big(e_k(i)\big)\big]\\
&=\arg \min - \sum_{k=1}^{N}\sum_{i=1}^{m}\Big[\log(c_i-d_i) - \sigma_i^2\big(1-\exp(-\frac{e_k^2(i)}{2d_{i}^2\sigma_i^2})\big)\Big]
\label{mle}
\end{aligned}
\end{equation}
where $6)$ is obtained by taking the negative logarithm function with respect to $\mathcal{L}$. Unfortunately, problem \eqref{mle} cannot be optimized directly since the correntropy parameters $\{\sigma_i\}_{i=1}^{N}$ and $\{d_i\}_{i=1}^{N}$ are coupled with the parameter vector $\theta$. To cope with this issue, an expectation-maximization-like (EM-like) algorithm to used to solve \eqref{mle} alternatingly:
\begin{itemize}
	\item E-step: estimate correntropy parameters $Z_t=(\{\sigma\}_{i=1}^{m},\{d_i\}_{i=1}^{m})$ under current parameter vector $\theta_t$ by solving $Z_t = \arg \max \mathcal{L}\Big(Z_t;\theta_t,\{y(k)\}_{k=1}^{N},\{X_k\}_{k=1}^{N}\Big)$.
	\item M-step: update $\theta_{t+1}$ by solving $\theta_{t+1}=\arg \max \mathcal{L}({\theta}_{t+1};Z_t,\{y(k)\}_{k=1}^{N},\{X_k\}_{k=1}^{N})$. As proved in \eqref{globj}, this procedure is equivalent to minimizing the MKCL as shown in \eqref{cl}.
\end{itemize} 

By assuming that the correntropy parameters at different channels are independent and defining $Z_i\triangleq (\sigma_i,d_i)$, we can execute E step one channel by one channel, i.e.,
\begin{equation}
	\begin{aligned}
	&Z_i = \arg \min f (Z_i) \\
	&=- \arg \min \sum_{k=1}^{N}\Big[\log(c_i-d_i)- \sigma_i^2\big(1-\exp\big(-\frac{e_k^2(i)}{2d_{i}^2\sigma_i^2}\big)\big)\Big]
	\label{kernelopt}
	\end{aligned}
\end{equation} 
where $i=1,2,\ldots,m$. It is worth mentioning that $c_i$ is an implicit function on $Z_i$ and hence cannot be ignored in optimization. In the practical implementation, a truncated distribution of $p(e_k(i))$ is used with sufficiently big domain $e_k(i) \in [-a, a]$ where $a$ can be manually selected so that all $e_k(i)$ is covered by this domain. Then, $c_i$ can be numerically calculated as $c_i=1/\int_{-a}^{a}\frac{1}{d_i}\exp\big(-\sigma_i^2(1-\exp\big(-\frac{e_k(i)^2}{2d_i^2\sigma^2})\big)\big)$ where the numerical integral can be utilized with \emph{integral} command in MATLAB.

The problem \eqref{kernelopt} is a nonlinear objective function and has flat regions when $\sigma_i$ and $d_i$ are big. Hence, gradient-based optimizer is not efficient. To solve this problem, the BFGS algorithm~\cite{b9} is utilized. For channel $i$, we denote $\mathbf{z} \triangleq Z_i$. Then, the minimization problem \eqref{kernelopt} becomes
\begin{equation}
	\begin{aligned}
	&\mathbf{z} = \arg\min f(\mathbf{z})\\
	&=\arg \min - \sum_{k=1}^{N}\sum_{i=1}^{m}\Big[\log(c_i-d_i) - \sigma_i^2\big(1-\exp(-\frac{e_k^2(i)}{2d_{i}^2\sigma_i^2})\big)\Big]
	\label{obji}
	\end{aligned}
\end{equation}
where $f(\mathbf{z}): \mathbb{R}^{2} \rightarrow \mathbb{R}$, and $\mathbf{z} \in \mathbb{R}^{2}$. The detailed algorithm is summarized in Algorithm \ref{kpo} (one can refer to \cite{b9} for more details on nonlinear optimization with BFGS method).

The whole linear regression algorithm with correntropy parameters optimization under the MKC is summarized in Algorithm \ref{lrmkc}, which is called MKC-EM. 

\begin{remark}
	It is worth mentioning that we can solely optimize the kernel bandwidth $\sigma_i$ if the nominal standard deviation $d_i$ is the \emph{a priori} knowledge for channel $i$ in practical applications.  
\end{remark}

\begin{algorithm}[t]
	\caption{Linear regression with MKC-EM}
	\label{lrmkc}
	\KwIn{$\{X_k\}_{k=1}^{N}$ and $\{y_k\}_{k=1}^{N}$}
	\KwOut{Parameter vector $\bm\theta$}
	\textbf{Initialization}: initialize $\theta_0$ by executing Algorithm \ref{fps} and set maximum EM iteration number $t_{iter}$\\
	\While{$t \le t_{iter}$}
	{E step: solve $Z_t = \arg \max \mathcal{L}\big(Z_t;\theta_t,\{X_k\}_{k=1}^{N},\{y_k\}_{k=1}^{N}\big)$ by Algorithm \ref{kpo}\\
		M step: update $\bm\theta_t$ with current $Z_t$ by Algorithm \ref{fps}\\
		$t \leftarrow t+1$}
\end{algorithm}
\begin{remark}
	If we use $\sigma \to \infty$ and ignore the correntropy parameter optimization procedure, MKC-EM degenerates to the conventional WLS regression. If we fix $\sigma \to \infty$ and only optimize $\{d_i\}_{i=1}^{m}$ in Algorithm \ref{lrmkc} (i.e, $Z_t=\{d_i\}_{i=1}^{m}$), then MKC-EM degrades to WLS regression with adaptive weighting matrix.
\end{remark}
\subsection{Convergence Issues}
\label{convergence}
\color{black}
The convergence of the MKC-EM is related to the behavior of the fixed-point solution in Algorithm \ref{fps}, the correntropy parameter optimization in Algorithm \ref{kpo}, and the EM iteration itself. For the fixed-point solution, Chen et al. proved that the convergence of the fixed-point solution is guaranteed under the conventional correntropy if the kernel bandwidth is bigger than a certain value and an initial condition holds in Theorem 2 of \cite{c20}. This theorem can be extended to our Algorithm \ref{fps}, i.e., problem \eqref{fixed} would surely converge to a unique solution if all kernel bandwidths in the MKCL are bigger than a certain threshold. The BFGS algorithm utilized in  Algorithm \ref{kpo} is a popular quasi-Newton method for nonlinear optimization. Its convergence rate is superlinear under some conditions (details are in Theorems 6.5 and 6.6 of \cite{b15}). The convergence behavior of the EM algorithm was discussed in detail in \cite{b16}. Although the EM algorithm possibly converges to local minima or saddle points in some unusual cases \cite{b16}, its performance is satisfactory and converges to the steady state after 2-3 iterations in our algorithm which will be illustrated in illustrative examples in the following section. 
\color{black}
\section{Illustrative Examples}
In this section, we use three examples to demonstrate the effectiveness of the proposed method.
\subsection{Example 1}
Consider the problem 
\begin{equation}
y_k = x_k \theta^{o} + v_k
\end{equation}
where $k$ is the sample index, $x_k = 8 \sin (0.04\pi k)$ is the input, $y_k$ is the output, and $v_k$ is the noise that follows
\begin{equation}
v_k \sim 0.9\mathcal{N}(0, 0.25) + 0.1\mathcal{U}(-20,20).
\end{equation}
\begin{figure}[!h]
	\centerline{\includegraphics[width=6.0cm]{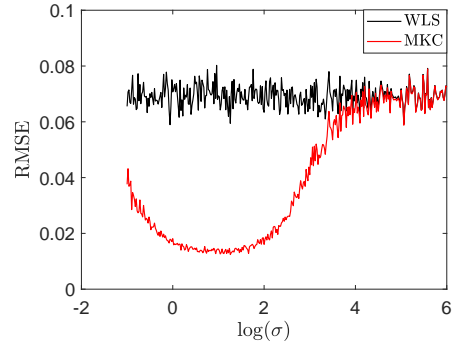}}
	\caption{A comparison of the estimation error under the MKC and WLS with different kernel bandwidths.}
	\label{MKCVSMSE}
\end{figure}

\begin{figure*}[!t]
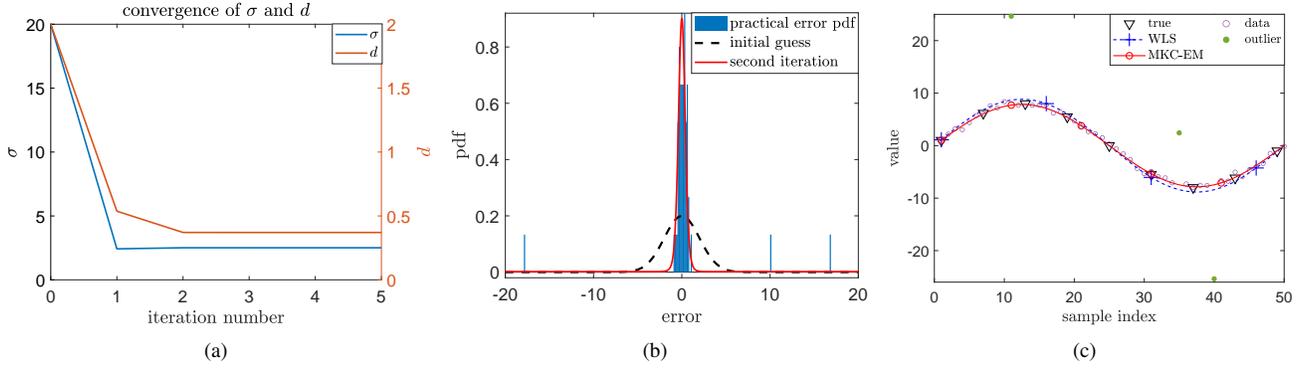

	\centering
	\subfigure[]{\includegraphics[width=2.25in]{scalar_conv.pdf}%
		\label{scalar_conv}}
	\subfigure[]{\includegraphics[width=2.25in]{scalar_pdf.pdf}%
		\label{scalar_sigma}}
	\subfigure[]{\includegraphics[width=2.15in]{scalar_sin.pdf}%
		\label{scalar_sin}}
	\caption{The regression results of the MKC-EM and WLS in example 1. The convergence of $\sigma$ and $d$ is shown in Fig. \ref{scalar_conv}. The practical error pdf, induced pdf at 0 and 2 iterations are shown in Fig. \ref{scalar_sigma}. The fitting results under the WLS and MKC-EM are shown in  Fig. \ref{scalar_sin}.}		
	\label{sim1}
\end{figure*}

To investigate the influence of kernel bandwidths in regression under the MKCL, we apply Algorithm \ref{fps} and traverse bandwidths from  $\sigma=\exp(-1.5)$ to $\sigma=\exp(6)$ and execute 200 Monte Carlos runs to obtain the average root mean squared error (RMSE) of $\theta$ [i.e., RMS of $(\theta^{0}-\theta)$]. In the simulation, the nominal standard deviation is set as $d=0.5$. A comparison of the RMSE under the MKCL and WLS is shown in Fig. \ref{MKCVSMSE}. One can see that the RMSE under the MKCL first decreases and then increases with the growth of $\sigma$, and finally coincides with the results of the WLS, which is consistent with Theorem \ref{theorem1} and the log-likelihood investigation in Fig. \ref{lh_cmp}.

We also conduct Algorithm \ref{lrmkc} to obtain both the kernel bandwidth $\sigma$, nominal standard deviation $d$, and the parameter vector ${\theta}$. The initial correntropy parameters are set to be $\sigma=20$ and $d=2$. The convergences of the correntropy parameters, the initial guessed residuals pdf and the MKC-induced pdf at the second iteration, and the comparison of the MKC-EM and WLS are shown in Figs. \ref{scalar_conv},  \ref{scalar_sigma}, and \ref{scalar_sin}, respectively. One can see that MKC-EM converges quickly and its induced pdf approaches the practical one within 2 iterations. In addition, the fitting result of the MKC-EM is very robust to outliers. 
\subsection{Example 2}
\begin{figure*}[htbp]
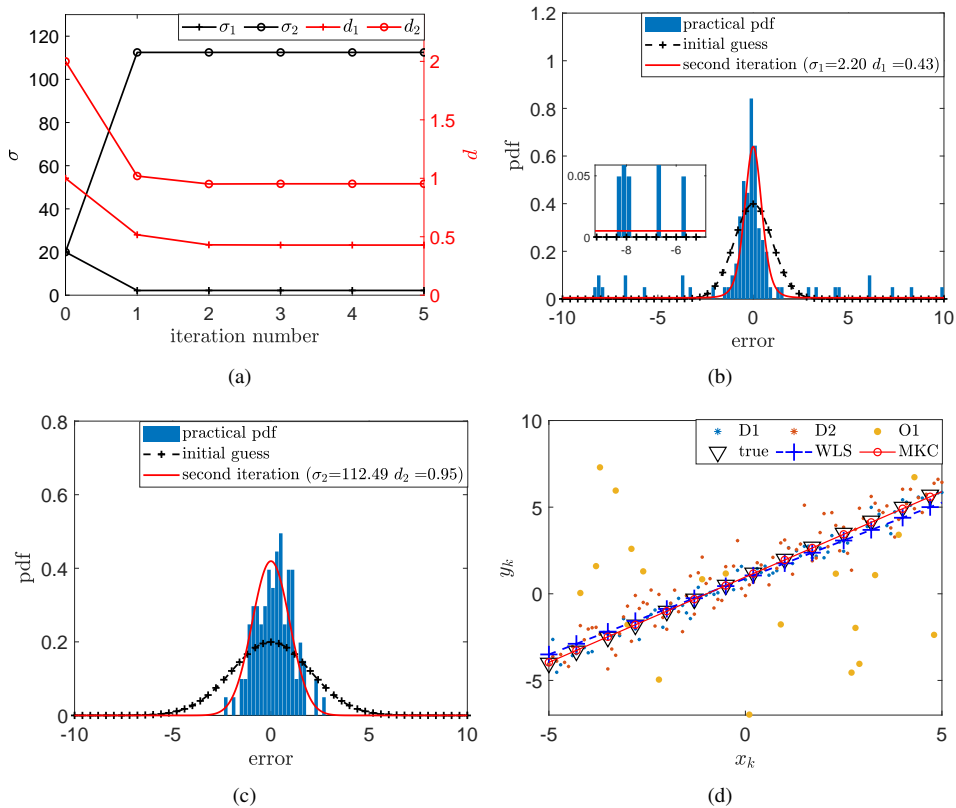

	\centering
	\subfigure[]{\includegraphics[width=2.5in]{0810_sigma_d.pdf}%
		\label{0810_sigma}}
	\subfigure[]{\includegraphics[width=2.43in]{0810_pdf1.pdf}%
		\label{0810_pdf1}}
	
	\subfigure[]{\includegraphics[width=2.5in]{0810_pdf2.pdf}%
		\label{0810_pdf2}}
	\subfigure[]{\includegraphics[width=2.38in]{0810_reg.pdf}%
		\label{0810_reg}}
	\caption{The regression results with $p_1=0.8$ and $p_2=1.0$. The convergence of $\sigma$ and $d$ are shown in Fig. \ref{0810_sigma}. The practical error pdf, the initial Guess distribution, and the second iteration-induced pdf for channel 1 and 2 in MKC-EM are shown in Fig. \ref{0810_pdf1} and Fig. \ref{0810_pdf2}. The fitting results of the WLS and MKC-EM is shown in Fig. \ref{0810_reg} where D1 and D2 represent the dense noise and O1 represents the outliers generated by channel 1. }		
	\label{sim2}
\end{figure*}

\begin{table*}[!htbp]
	\centering
    \caption{Performance of Different Algorithms in Example 2.}
	\begin{tabular}{cccccc}
		\hline 
		\hline 
		\multirow{2}{*}{case}      & $\left[p_1,p_2\right]$ & WLS $(\|\theta-\theta^{o}\|_2)$            & LAD $(\|\theta-\theta^{o}\|_2)$                   & MKC   $(\|\theta-\theta^{o}\|_2)$                 & MKC-EM    $(\|\theta-\theta^{o}\|_2)$             \\
		&time cost& mean $\pm$ std  &mean $\pm$ std &mean $\pm$ std & mean $\pm$ std \\
		\hline
		\multirow{2}{*}{1} & $[1.0,1.0]$   &  \bm{$0.0395\pm0.0259
		$}& $0.0567\pm0.0336	$ & $0.0508\pm0.0331$ &\bm{$0.0395 \pm 0.0260$}
		  \\
		& time (s)        &  $6.2\times 10^{-5} \pm 6.5\times 10^{-5}$                    &   $0.0035\pm0.0016$                   &         $0.0011\pm0.0003$             &       $0.0264\pm0.0303$               \\
		\multirow{2}{*}{2} &   $[0.8,1.0]$           &        $0.1745\pm0.1011$              &   $0.0585\pm0.0381$                   &         $0.0545\pm0.0356$             &     \bm{$0.0450\pm0.0298$}                 \\
		& time (s)        & $5.9\times 10^{-5} \pm 5.6\times 10^{-5}$                        &     $0.0025\pm0.0008$                 &           $0.0012\pm0.0003$           & $0.0224\pm0.0035$                     \\
		\multirow{2}{*}{3} &   $[1.0,0.8]$              &    $0.1002\pm0.0600$                  &    $0.0567\pm0.0335$                  &   $0.0499\pm0.0309$                   &   \bm{$0.0380\pm0.0260$}                   \\
		& time (s)         &  $6.1\times 10^{-5} \pm 5.7\times 10^{-5}$                    &  $0.0020\pm0.0007$                    &       $0.0012\pm0.0002$               &     $0.0229\pm0.0040$                 \\
		\multirow{2}{*}{4} &     $[0.8,0.8]$            &    $0.1998\pm0.1282$                  &          $0.0606\pm0.0384$            &  $0.0495\pm0.0341$                    &        \bm{$0.0418\pm0.0284$}             \\
		& time (s)        &    $5.8\times 10^{-5} \pm 5.0\times 10^{-5}$                    &  $0.0020\pm0.0007$                    &       $0.0012\pm0.0004$    &       $0.0207\pm0.0033$               \\
		\multirow{2}{*}{5} &    $[0.8,0.6]$             &     $0.3653\pm0.2353$                 &         $0.0711\pm0.0451$             &  $0.0558\pm0.0377$                    &        \bm{$0.0504\pm0.0328$}              \\
		& time (s)        &         $5.8\times 10^{-5} \pm 5.2\times 10^{-5}$          &           $0.0020\pm0.0008$           &        $0.0011\pm0.0002$              &        $0.0215\pm0.0038$              \\
		\multirow{2}{*}{6} &    $[0.6,0.8]$             & $0.4726\pm0.2593$                 &    $0.1314\pm0.0786$           &   $0.0747\pm0.1193$                   &    \bm{$0.0628\pm0.0461$}                  \\
		& time (s)        & $5.8\times 10^{-5} \pm 5.5\times 10^{-5}$                       &  $0.0025\pm0.0008$                    &           $0.0012\pm0.0003$           & $0.0243\pm0.0036$          \\
		\hline
		\hline          
	\end{tabular}
		\label{rmse}
\end{table*}

\begin{figure*}[htbp]
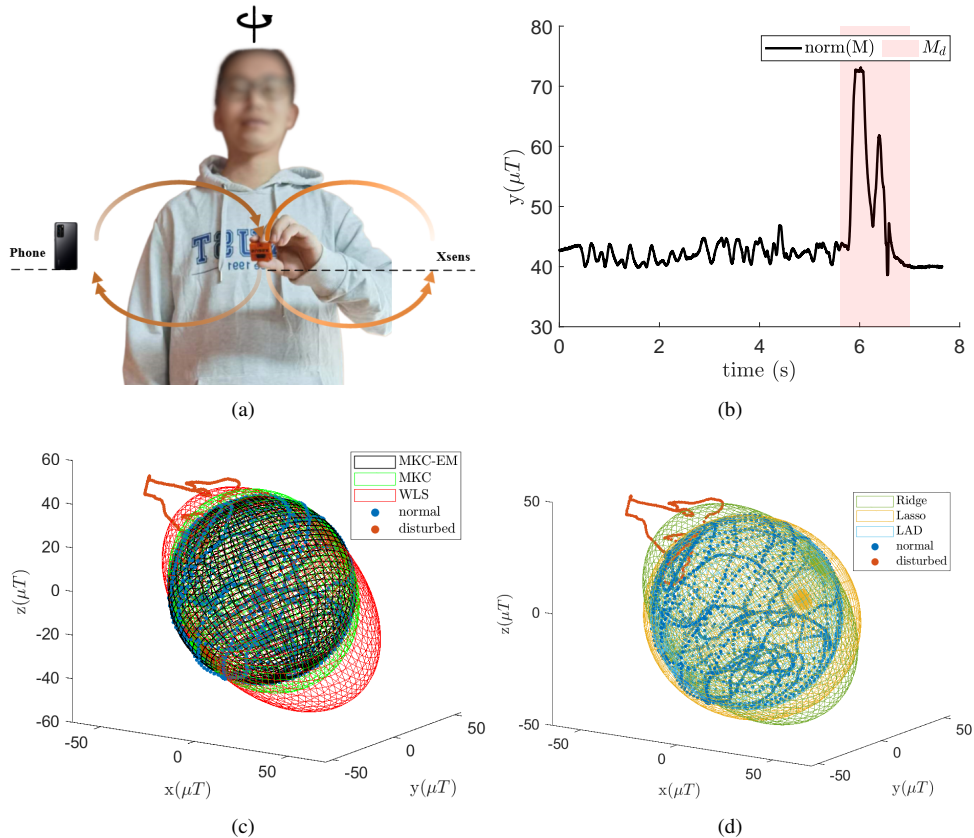

	\centering
	\subfigure[]{\includegraphics[width=2.5in]{exp_setup.png}%
		\label{expsetup}}
	\subfigure[]{\includegraphics[width=2.5in]{mag_norm.pdf}%
		\label{mag_norm}}
	
	\subfigure[]{\includegraphics[width=2.5in]{mkcem_wls.pdf}%
		\label{mkc_wls}}
	\subfigure[]{\includegraphics[width=2.5in]{lad_lasso.pdf}%
		\label{mkc_lasso}}
	\caption{The calibration experiment setup and fitting results. (a) Experimental setup for magnetometer calibration. During the calibration, the Xsens was occasionally close to a mobile phone, resulting in some of the magnetometer measurements being disturbed. (b) The norm of the magnetometer readings. The data that is occasionally disturbed by approaching a mobile phone and the disturbed region is shown in the pink area. (c) Ellipsoid fitting results of MKC-EM, MKC, and WLS. The normal data is shown by the solid blue dot while the disturbed data is shown by the solid brown dot. (d) Ellipsoid fitting results of Ridge regression, and Lasso regression, and LAD regression. The normal data is shown by the solid blue dot while the disturbed data is shown by the solid brown dot.}		
	\label{elliposid}
\end{figure*}
Consider the problem 
\begin{equation}
\begin{bmatrix}
y_{1,k}\\
y_{2,k}
\end{bmatrix} = \begin{bmatrix}1,x_k\\
1,x_k
\end{bmatrix}\begin{bmatrix}
a\\
b
\end{bmatrix}+ \begin{bmatrix}
v_{1,k}\\
v_{2,k}
\end{bmatrix}
\end{equation}
where $X_k=\begin{bmatrix}1,x_k\\
1,x_k
\end{bmatrix}$ is the input at time step $k$, $y_k=\begin{bmatrix}
y_{1,k}\\
y_{2,k}
\end{bmatrix} $ is the corresponding output, and $v_k=\begin{bmatrix}
v_{1,k}\\
v_{2,k}
\end{bmatrix}$ is the noise that follows
\begin{equation}
\begin{aligned}
v_{1,k} &\sim p_1 \mathcal{N}(0, 0.25) + (1-p_1)\mathcal{U}(-10,10)\\
v_{2,k} &\sim p_2 \mathcal{N}(0, 1) + (1-p_2)\mathcal{U}(-20,20).
\label{noise}
\end{aligned}
\end{equation} 
where $p_1$ and $p_2$ are parameters that determine the extent of the heavy tail of the noise. One can see that $y_k$ is heteroscedastic and has different levels of tail for channel $y_{1,k}$ and $y_{2,k}$ (we denote it as channel 1 and channel 2). Our aim is to recover parameter vector $\theta=[a, b]^{T}$ accurately with $\{X_k\}_{k=1}^{N}$ and $\{y_k\}_{k=1}^{N}$ and the ground truth parameter vector is $\theta^{o}=[1,1]^{T}$. We compare the average performance of WLS, MKC, least absolute deviation (LAD)~\cite{b13}, and MKC-EM regression under different $p_1$ and $p_2$ in cases 1 to 6 (as shown in Table \ref{rmse}). To investigate the performance of the proposed algorithm \emph{under different types of heavy-tailed noises}, we use \eqref{noise} in cases 1 to 4, employ $v_{1,k} \sim p_1 \mathcal{N}(0, 0.25) + (1-p_1)\mathcal{N}(0,100)$ and $v_{2,k} \sim p_2 \mathcal{N}(0, 1) + (1-p_2)\mathcal{N}(0,200)$ in case 5, and utilize $v_{1,k} \sim p_1 \mathcal{N}(0, 0.25) + (1-p_1)\mathcal{U}(0,10)$ and $v_{2,k} \sim p_2 \mathcal{N}(0, 1) + (1-p_2)\mathcal{U}(0,20)$ in case 6.

The initial nominal standard deviations are set as $d_1=1$ and $d_2=2$ for the WLS, MKC, and MKC-EM, while the initial kernel bandwidth is set as $\sigma_1=\sigma_2=0.5$ for the MKC and $\sigma_1=\sigma_2=20$ for the MKC-EM. The algorithms are executed in MATLAB on a laptop (Core(TM) i7-1360P, 2.2-GHz CPU, 16-GB RAM) and the sample number is $N=100$. We investigate the Euclidean norm of the parameter vector estimate error $\|\theta-\theta^{o}\|_2$ in 200 Monte Carlo runs, and summarize the corresponding mean $\pm$ standard deviation metrics together with the execution time of different methods in Table \ref{rmse}. One can see that the MKC-EM is identical to WLS (the subtle difference is caused by optimization tolerance) in case 1, and outperforms other others in cases 2 to 6. Moreover, one can observe that WLS degrades significantly with the growth of $p$ but this phenomenon is remarkably alleviated by the MKC-EM. 

We also visualize case 2 in one run in Figs. \ref{0810_sigma}, \ref{0810_pdf1}, \ref{0810_pdf2}, and \ref{0810_reg}. One can see that the kernel bandwidth converges to a small value for channel 1 while it increases to a large value for channel 2 which is in line with the physical interpretation of the kernel bandwidth as shown in Fig. \ref{pdf_ek}. We also observe that the optimized standard deviation is close to the nominal one. Not surprisingly, the proposed MKC-EM has a better pdf matching with the error distribution compared with the initial guess [see Figs. \ref{0810_pdf1} and \ref{0810_pdf2}].
\subsection{Magnetometer Calibration}
We use Xsens MTI-670 which integrates an accelerometer, a gyroscope, and a magnetometer to demonstrate the performance of the proposed algorithm on magnetometer calibration. Specifically, we wave the sensor in a figure-of-eight movement several times accompanied by rotation along the heading direction to ensure that the sensor rotates through all three axes. During the calibration procedure, the magnetometer measurements are occasionally contaminated by external magnetic disturbance (i.e., the mobile phone in Fig. \ref{expsetup}). The experimental setup is shown in Fig. \ref{expsetup}. Our aim is to recover the ellipsoid parameter vector accurately by observing the sampled magnetometer data even if some of the measurements are disturbed. The equations for the ellipsoid are shown in Appendix \ref{ellip}.

The norm of the magnetometer readings is shown in Fig. \ref{mag_norm} with the disturbed area highlighted in pink. The fitting results of the WLS, MKC, and MKC-EM are shown in Fig. \ref{mkc_wls}, and the results of the Ridge regression~\cite{b10}, Lasso regression~\cite{b11}, and LAD regression are shown in Fig. \ref{mkc_lasso}. One can observe that both MKC-EM and LAD fit with the normal data very well and are very robust to magnetic disturbances.

To investigate the performance of different algorithms, we conduct a disturbance-free experiment by removing the surrounding ferromagnetic materials. In this scenario, the measurement noise can be seen as Gaussian and hence the solution of WLS is optimal and is regarded as the ground truth parameter vector $\theta_{ell}^{o}$. Through equations \eqref{ro} and \eqref{S} in Appendix \ref{ellip}, we can subsequently obtain the ground truth ellipsoid center $r_0^{o}$ and semi-axes length vector $s^{o}$. Using this information, we summarize the performance of different algorithms under the disturbed experiment in Table \ref{ellipsoid}. One can see that the result of the MKC-EM is very close to $\theta_{ell}^{o}$ and significantly outperforms the others. 
\begin{table}[htbp]
	\centering
	\caption{Estimation Errors of Different Algorithms.}
	\begin{tabular}{cccc}
		\hline
		\hline
		methods& $\|\theta_{ell}-\theta_{ell}^{o}\|_2$&$\|r_0-r_{0}^{o}\|_2$&$\|s-s^{o}\|_2$  \\
		 LS   & $2.97 \times 10^{-3}$  &$6.329$  &   $17.034$ \\
		 Ridge  & $1.16 \times 10^{-3}$  &$2.096$  & $12.905$\\
		 Lasso  & $2.95 \times 10^{-3}$  &$3.662$  & $8.384$\\
		 LAD  &  $3.18 \times 10^{-4}$ & $0.307$ & $0.411$\\
		 MKC  & $6.45 \times 10^{-4}$  &$0.955$  & $5.170$\\
		 MKC-EM  & \bm{$1.14 \times$ $10^{-4}$}  &\bm{$0.100$}  & \bm{$0.136$}\\
		\hline
	\end{tabular}
\label{ellipsoid}
\end{table}
\section{Conclusion}
This paper investigates the robustness and optimality of the MKC in the context of linear regression. Specifically, we analyze the robustness of the MKC using scalar regression and emphasize the significance of selecting appropriate kernel bandwidths. Additionally, we demonstrate that the MKCL serves as an optimal objective function when the noise distribution conforms to a type of heavy-tailed distribution. To optimize the latent variables (i.e., the correntropy parameters), we develop an EM algorithm that estimates the parameter vector and latent variables alternatingly. Simulations and experiments verify that the proposed algorithm performs very well under Gaussian noise, non-Gaussian noise, and part of channels contaminated by non-Gaussian noise, making it an attractive option when the noise distribution is unknown and probably heavy-tailed. In the future, we will extend this method to the fields of nonlinear regression and classification with heavy-tailed noises.
\section{Appendix}
\subsection{\textcolor{black}{Ellipsoid Fitting}}
\label{ellip}
The implicit equation of a general ellipsoid has
\begin{equation}
	a_{1}x^2+a_{2}y^2+a_{3}z^2+a_{4}xy+a_{5}xz+a_{6}yz+a_{7}x+a_{8}y+a_{9}z=1
	\label{implicit}
\end{equation}
where $(x,y,z)$ is the point defined in a Cartesian coordinate system. By denoting $X_{ell,k}=[x^2,y^2,z^2,xy,xz,yz,x,y,z] \in \mathbb{R}^{1 \times 9}$ and $\theta^{o}_{ell}=[a_{1},a_{2},a_{3},a_{4},a_{5},a_{6},a_{7},a_{8},a_{9}]^{T} \in \mathbb{R}^{9 \times 1}$, $y_{ell,k}=1$, and considering additional noise $v_{ell,k}$, \eqref{implicit} can be written as
\begin{equation}
	y_{ell,k} = X_{ell,k}\theta^{o}_{ell} + v_{ell,k}
	\label{elpf}
\end{equation}
which becomes a linear regression problem. Assume that the parameter vector has been obtained by some optimization algorithms (e.g., the WLS regression in \eqref{WLS} and \eqref{solution}). Then, denote
\begin{equation}\nonumber
	A=\begin{bmatrix}
		a_1, \frac{a_4}{2}, \frac{a_5}{2}\\
		\frac{a_4}{2}, a_2, \frac{a_6}{2}\\
		\frac{a_5}{2}, \frac{a_6}{2},a_3
	\end{bmatrix},
	B=\begin{bmatrix}
		a_7\\
		a_8\\
		a_9
	\end{bmatrix}.
\end{equation}
Based on ~\cite{b8}, the center of the recovered ellipsoid $r_0^{o} \in \mathbb{R}^{3}$ has
\begin{equation}
	r_0^{o}=-\frac{1}{2}A^{-1}B
	\label{ro}
\end{equation}
and the length of the semi-axes vector $s^{o}$ can be obtained by extracting diagonal element the  matrix $S$ with
\begin{equation}
	\begin{aligned}
		S&=\Sigma^{-1/2}\\
		s^{o}&=[s_{11},s_{22},s_{33}]^{T}
	\end{aligned}
	\label{S}
\end{equation}
where $S=[s_{ij}]$, $s^{o} \in \mathbb{R}^{3}$, $\Sigma$ is the diagonal matrix obtained by diagonalizing the matrix $A_1$ with $A_1=Q\Sigma Q^{T}$ and $A_1=\frac{A}{1+r_0^{T}Ar_0}$. To parameterize a general ellipsoid, one has
\begin{equation}
	r=r_0^{o}+Q\Sigma^{-1/2}p
	\label{recov1}
\end{equation}
where $p$ is on a unit sphere and
\begin{equation}
	p=\begin{bmatrix}
		\cos \theta \cos \phi\\
		\cos \theta \sin \phi\\
		\sin \theta
	\end{bmatrix}
	\label{recov2}
\end{equation}
with $-\pi/2<\theta<\pi/2$ and $0<\phi<2\pi$.
\bibliographystyle{IEEEtran}
\bibliography{MultiKerlnelRegression}
\end{document}